\title{Online Multistage Subset Maximization Problems} %TODO Please add
\author{Evripidis Bampis}{Sorbonne Universit\'e, CNRS, LIP6 UMR 7606, 4 place Jussieu, 75005 Paris, France.}{evripidis.bampis@lip6.fr}{}{}%TODO mandatory, please use full name; only 1 author per \author macro; first two parameters are mandatory, other parameters can be empty. Please provide at least the name of the affiliation and the country. The full address is optional
\author{Bruno Escoffier}{Sorbonne Universit\'e, CNRS, LIP6 UMR 7606, 4 place Jussieu, 75005 Paris, France.}{bruno.escoffier@lip6.fr}{}{}
\author{Kevin Schewior}{Fakult\"at f\"ur Informatik, Technische Universit\"at M\"unchen, Germany;\\ D\'epartment d'Informatique, \'Ecole Normale Sup\'erieure Paris, PSL University, France }{kschewior@gmail.com}{}{Supported by a DAAD PRIME grant.}
\author{Alexandre Teiller}{Sorbonne Universit\'e, CNRS, LIP6 UMR 7606, 4 place Jussieu, 75005 Paris, France.}{alexandre.teiller@lip6.fr}{}{}
\authorrunning{E. Bampis et al.}%TODO mandatory. First: Use abbreviated first/middle names. Second (only in severe cases): Use first author plus 'et al.'
\keywords{Multistage optimization; On-line algorithms}%TODO mandatory; please add comma-separated list of keywords
\newcommand\ot{multistage}
\newcommand\easy{Hamming}
\newcommand\hard{Intersection}
\begin{document}

\maketitle

%TODO mandatory: add short abstract of the document
\begin{abstract}
 Numerous combinatorial optimization problems (knapsack,  maximum-weight matching, etc.) can be expressed as \emph{subset maximization problems}: One is given a ground set $N=\{1,\dots,n\}$, a collection $\mathcal{F}\subseteq 2^N$ of subsets thereof such that $\emptyset\in\mathcal{F}$, and an objective (profit) function $p:\mathcal{F}\rightarrow\mathbb{R}_+$. The task is to choose a set $S\in\mathcal{F}$ that maximizes $p(S)$. We consider the \emph{multistage} version (Eisenstat et al., Gupta et al., both ICALP 2014) of such problems: The profit function $p_t$ (and possibly the set of feasible solutions $\mathcal{F}_t$) may change over time. Since in many applications changing the solution is costly, the task becomes to find a sequence of solutions that optimizes the trade-off between good per-time solutions and stable solutions taking into account an additional similarity bonus.
%We adopt the specific objective function proposed in the first work on a multistage \emph{maximization} problems (Bampis et al., AAMAS 2018): The new objective function becomes the sum of the profit of the per-time solutions and a similarity measure between any two consecutive solutions. 
As similarity measure for two consecutive solutions, we consider either the size of the intersection of the two solutions or the difference of $n$ and the Hamming distance between the two characteristic vectors. 

We study multistage subset maximization problems  in the \emph{online} setting, that is, $p_t$ (along with possibly $\mathcal{F}_t$) only arrive one by one and, upon such an arrival, the online algorithm has to output the corresponding solution without knowledge of the future.
 
We develop general techniques for online multistage subset maximization and thereby characterize those models (given by the type of data evolution and the type of similarity measure) that admit a constant-competitive online algorithm. When no constant competitive ratio is possible, we employ lookahead to circumvent this issue. When a constant competitive ratio is possible, we provide almost matching lower and upper bounds on the best achievable one.
\end{abstract}

\newpage

\section{Introduction}

In a classical combinatorial optimization setting, given an instance of a problem one needs to find a good feasible solution. However, in many situations, the data may evolve over the time and one has to solve a sequence of instances. The natural approach of solving every instance independently may induce a significant transition cost, for instance for moving a system from one state to another. This cost may represent e.g. the cost of turning on/off the servers in a data center \cite{LinWAT13,BansalGKPSS15,AntoniadisS17,AlbersQ18}, the cost of changing the quality level in video streaming \cite{Joseph}, or the cost for turning on/off nuclear plants \cite{thesececile}.  %It becomes then necessary to study optimization problems in a new framework where the objective is to know whether it is possible to determine  a tradeoff between the quality of the solutions in each time step and the stability/similatity of the solutions in cosnecutive time steps.  
Gupta et al. \cite{Gupta} and Eisenstat et al. \cite{Eisenstat}  proposed
 a \emph{multistage} model where given  a time horizon $t = 1, 2, \ldots,T$,  the input is a sequence of instances $I_1,I_2,\ldots,I_T$, (one for each time step), and the goal is to find a sequence of solutions $S_1,S_2,\ldots,S_T$ (one for each time step) reaching  a tradeoff between the quality of the solutions in each time step and the stability/similarity of the solutions in consecutive time steps. The addition of the transition cost makes some  classic combinatorial optimization problems much harder. This is the case  for instance  %some problems (e.g. 
 for the minimum weighted perfect matching problem in the \emph{off-line} case where the whole sequence of instances is known in advance. While the one-step problem is polynomially-time solvable, the multistage problem becomes hard to approximate even for bipartite graphs and for only two time steps \cite{Bampis,Gupta}. 
 
 In this work, we focus on the \emph{on-line}  case, where at time $t$ no knowledge is available for instances at times $t+1, \ldots, T$.  When it is not possible to handle the on-line case, we turn our attention to the $k$-\emph{lookahead} case, where at time $t$ the instances at times $t+1, \ldots, t+k$ are also known. This case is of interest since in some applications like in  dynamic capacity planning in data centers,
 the forecasts of future demands may be very helpful  \cite{Lin,Liu}. Our goal is to measure the impact of the lack of  knowledge of the future  on the quality and the stability of the returned solutions. 
 Indeed, our algorithms are limited in their knowledge of the sequence of instances.
Given that the number of time steps is given, we compute the competitive ratio of the algorithm after time step $T$: As we focus on maximization problems, we say that an algorithm is (strictly) $\alpha$-competitive (with competitive ratio $\alpha$) if its value is at least $\frac1\alpha$ times the optimal value on all instances.

As it is usual in the online setting, we consider no limitations in the computational resources available. This means that at every time step $t$, where instance $I_t$ is known, we assume the existence of an oracle able to compute the optimal solution for that time step.  Notice also that our lower bounds do not rely on any complexity assumptions. 
Some recent results are already known for the on-line multistage model  \cite{Bampis+,Gupta}, however all these results are obtained for specific problems.
In this work, we study multistage variants  of a broad family of maximization problems. The family of optimization problems that we consider is the following.

 %Different variants of a multistage optimization problem can be defined depending on (i) data evolution, (ii) the measure of stability and (iii) the knowledge of the instances' sequence. 

\begin{definition}

\emph{(Subset Maximization Problems.)} A Subset Maximization problem $\cal P$ is a combinatorial optimization problem whose instances $I=(N,p,\mathcal{F})$ consist of
\begin{itemize}
    \item A ground set $N$;
    \item A set $\mathcal{F}\subseteq 2^N$ of feasible solutions such that $\emptyset\in\mathcal{F}$;
    \item A positive weight $p(S)$ for every $S \in \mathcal{F}$.
\end{itemize}
The goal is to find $S^*\in \mathcal{F}$ such that $p(S^*)=\max\{p(S):S\in\mathcal{F}\}$.
\end{definition}
We will consider that the empty set is always feasible, ensuring that the feasible set of solutions is non empty.
This is a very general class of problems, including the maximization \emph{Subset Selection} problems studied by Pruhs and Woeginger in \cite{Pruhs} (they only considered linear objective functions). It contains for instance graph problems where $N$ is the set of vertices (as in any maximization induced subgraph problem verifying some property) or the set of edges (as in  matching problems). It also contains classical set problems (knapsack, maximum 3-dimensional matching,\dots), and more generally 0-1 linear programs (with non negative profits in the objective function).

%Here are a few examples with a linear objective function ($p(S)=\sum_{i\in S} p_i$, with $p_i\geq 0$):
%\begin{itemize}
%    \item Graph problems where $N$ is the set of vertices (such as {\sc Maximum (weighted) independent set}, or more generally any maximum subgraph problems verifying a given property) or the set of edges ({\sc maximum (weighted) matching}, {\sc max (weighted) bipartite subgraph}, {\sc max (weighted) planar subgraph}, {\sc longest path},\dots);
%    \item classical set problems such as {\sc maximum 3-dimensional matching}, {\sc maximum knapsack}, some of its generalization such as $k${\sc-dimentional knapsack},\dots
%    \item More generally 0-1 linear programs (with non negative profits in the objective function).
%\end{itemize}

%Notice that these class of problems is broader than the maximization \emph{Subset Selection} problems studied by Pruhs and Woeginger in \cite{Pruhs} (they only considered linear objective functions).

%\subsection{\ot{} optimization}

%\subsubsection{Definition}

Given a problem in the previous class, we are interested in its \ot{} version~\cite{Gupta,Eisenstat}.
The stability over time of a solution sequence is classically captured by considering a transition cost when a modification is made in the solution. Here, dealing with maximization problems, we will consider a transition {\it bonus} $B$ for taking into account the similarity of two consecutive solutions.
In what follows, we will use the term object to denote an element of $N$ (so an object can be a vertex of a graph, or an edge,\dots, depending on the underlying problem). 	%\textcolor{red}{Kevin: were we assuming anywhere now that $\emptyset$ is feasible?}

\begin{definition}
\emph{(Multistage Subset Maximization Problems.)} In a Multistage Subset Maximization problem $\cal P$, we are given
\begin{itemize}
\item a number of steps $T \in \mathbb{N}$, a set $N$ of $n$ objects;
\item for any $t \in T$, an instance $I_t$ of the optimization problem. We will denote:
\begin{itemize}
\item $p_{t}$ the objective (profit) function at time $t$
\item $\mathcal{F}_t\in 2^N$ the set of feasible solutions at time $t$
\end{itemize} 
%\item For each $t$: $C_t$ the capacity of knapsack at time $t$
\item $B \in \mathbb{R^{+}}$ a given transition profit. 
\item the value of a solution sequence $\mathcal{S}=(S_1,\dots,S_T)$ is $$f(\mathcal{S})=\sum_{t=1}^T p_t(S_t) + \sum_{t=1}^{T-1} b(S_t,S_{t+1})$$
where $b(S_t,S_{t+1})$ is the transition bonus for the solution between time steps $t$ and $t+1$. We will use the term {\it profit} for $ p_t(S_t)$, {\it bonus} for the transition bonus $b(S_t,S_{t+1})$, and {\it value} of a solution $\mathcal{S}$ for $f(\mathcal{S})$;
\item the goal is to determine a solution sequence of maximum value. 
\end{itemize}
\end{definition}

 There are two natural ways to define the transition bonus. We will see that these two ways of measuring the stability induce some differences in the competitive ratios one can get.

\begin{definition}
\emph{(Types of transition bonus.)}
 If $S_t$ and $S_{t+1}$ denote, respectively, the solutions for time steps $t$ and $t+1$, then we can define the transition bonus as:
\begin{itemize}
    \item \emph{Intersection Bonus:}  $B$ times $|S_{t}\cap S_{t+1}|$: in this case the bonus is proportional to the number of objects in the solution at time $t$ that remain in it at time $t+1$.
    \item \emph{Hamming Bonus:} $B$ times $|S_{t}\cap S_{t+1}|+|\overline{S_{t}}\cap \overline{S_{t+1}}|$. Here we get the bonus for each object for which the decision (to be in the solution or not) is the same between time steps $t$ and $t+1$. In other words, the bonus is proportional to $|N|$ minus the number of modifications (Hamming distance) in the solutions.
\end{itemize}
\end{definition}

Note that by scaling profits (dividing them by $B$), we can arbitrarily fix $B=1$. So from now on, we assume $B=1$.

%Note that we may allow $w_{ti}>C_t$: this may model the fact that object $i$ is not available at time $i$.\\

%\subsubsection{Models of evolution}

In this article, we will consider two possible ways for the data to evolve. 

\begin{definition}
\emph{(Types of data evolution.)}
\begin{itemize}
\item \emph{Static Set of Feasible Solutions (SSFS):}  only profits may change over time, so the structure of feasible solutions remains the same: $\mathcal{F}_t=\mathcal{F}$ for all $t$. 

\item \emph{General Evolution (GE):}  any modification in the input sequence is possible. Both the profits and the set of feasible solutions may change over time. In this latter model, for knapsack, profits and weights of object (and the capacity of the bag) may change over time; for maximum independent set edges in the graph may change,\dots.   
\end{itemize}
\end{definition}

%SFS is studied in Section~\ref{sec:static}, while GE in Section~\ref{sec:general}.

%\subsubsection{On-line competitive analysis and model with lookahead}
%ADDED SOME LINES see lines 46-52

%To be said: 

%- on-line setting (we have to decide $S_t$ without knowing $I_{t'}$ for $t'>t$.

%- Competitive analysis: notion of asymptotic ratios (both for upper and lower bounds). Also a word on the fact that we require to be competitive only at the end (time $T$)

%- In the case where there is no competitive ratio in particular: model of lookahead. $k$ lookahead.

%Also somewhere: lower bounds given for knapsack, arguably one of the most simple problems of this class (the set of feasible solutions is defined as a single linear constraint).

\subsection{Related Work}

A series of papers consider the  online or semi-online settings, where the input changes over
time and the algorithm has to modify (re-optimize) the solution by making as few changes
as possible (see \cite{Anthony, Blanchard, Cohen, Gu, Megow, Nagarajan} and the references therein). 
The multistage model considered in this paper has been introduced in Eisenstat
et al. \cite{Eisenstat} and Gupta et al. \cite{Gupta}. Eisenstat
et al. \cite{Eisenstat} studied the multistage version of facility location problems. They proposed a logarithmic approximation algorithm. An et al. \cite{An} obtained constant factor approximation algorithms for some related problems.
 Gupta et al. \cite{Gupta} studied the {\sc Multistage  Maintenance Matroid} problem for
both the offline and the online settings. They presented a logarithmic approximation
algorithm for this problem, which includes as a special case a natural multistage version of
{\sc Spanning Tree}. They also considered the online version of the problem and they provide an efficient randomized competitive algorithm against any oblivious adversary. The same paper also introduced the study of the {\sc Multistage  Minimum Perfect Matching} problem for which they proved that it is hard to approximate 
 even for a constant number of stages.  Bampis et al. \cite{Bampis} improved this negative result by showing that the problem is
hard to approximate even for bipartite graphs and for the case of two time steps. When the edge costs are metric within every time step they  proved that the problem remains APX-hard even for two time steps. They also showed that the maximization version of the problem admits a constant factor approximation algorithm, but is APX-hard. Olver et al.~\cite{Olver} studied a multistage version of the {\sc Minimum Linear Arrangement} problem, which is related to a variant of the {\sc List Update} problem~\cite{Sleator}, and provided a logarithmic lower bound for the online version and a polylogarithmic upper bound for the offline version.

 The {\sc Multistage Max-Min Fair Allocation}
problem has been studied in the offline and  the online settings in \cite{Bampis+}. This problem corresponds to a multistage variant of the {\sc Santa Klaus} problem. For the off-line setting, the authors
showed that the multistage version of the problem is much harder
than the static one. They provided
constant factor approximation algorithms for the off-line setting. For the online setting they proposed a constant competitive ratio for SSFS-type evolving instances and they proved that it is not possible to find an online algorithm with bounded competitive ratio for GE-type evolving instances. Finally, they showed that in the 1-lookahead case, where at time step $t$ we know the instance of time step $t+1$, it is possible to get a constant approximation ratio.  

Buchbinder et
al. \cite{Buchbinder} and Buchbinder, Chen and Naor \cite{Buchbinder+} considered a multistage model and they studied the relation between the online learning and competitive analysis frameworks, mostly for fractional optimization problems. 

\subsection{Summary of Results and Overview}

The contribution of our paper is a framework for online multistage maximization problems (comprising different models), a characterization of those models in which a constant competitive ratio is achievable, and almost tight upper and lower bounds on the best-possible competitive ratio for these models.

We increase the complexity of the considered models over the course of the paper. We start with the arguably simplest model: Considering a static set of feasible solutions clearly restricts the general model of evolution; while such a straightforward comparison between the Hamming and intersection bonus is not possible, the Hamming bonus seems simpler in that, compared to the intersection model, there are (somewhat comparable) extra terms added on the profit of both the algorithm and the optimum. As we show in Subsection~\ref{subsec:static-hamming}, there is indeed a simple $2$-competitive algorithm: At each time $t$, it greedily chooses the set $S_t$ that either maximizes the transition bonus w.r.t.\ $S_{t-1}$ (that is, choosing $S_t=S_{t-1}$, which is possible in this model) or maximizes the value $p_t(S_t)$. We complement this observation with a matching lower bound only involving two time steps.

We then toggle the transition-bonus model and the data-evolution model separately and show that constant competitive ratios can still be achieved. First, in Subsection~\ref{subsec:static-intersection}, we consider intersection bonus. We show that, \emph{after} modifying the profits to make larger solutions more profitable, a $(2+1/(T-1))$-competitive algorithm can be achieved by a greedy approach again. We also give an (almost matching) lower bound of $2$ again. Next, we toggle the evolution model. In Subsection~\ref{subsec:general-hamming}, we adapt the greedy algorithm from Subsection~\ref{subsec:static-hamming} by reweighting to obtain a $(3+1/(T-1))$-competitive algorithm using a more complicated analysis. We complement this result with a lower bound of $1+\sqrt{2}$.

In Subsection~\ref{subsec:general-intersection}, we finally consider the general-evolution model with intersection bonus, where we give a simple lower bound showing that a constant-competitive ratio is not achievable. This lower bound relies on forbidding to choose any item in the second step that the algorithm chose in the first step. We circumnavigate such issues by allowing the algorithm a lookahead of one step and present a $4$-competitive algorithm for that setting. A similar phase transition has been observed for a related problem~\cite{Bampis+}, but our algorithm, based on a doubling approach, is different. We also give a matching lower bound of $4$ on the competitive ratio of any algorithm in the same setting. We summarize all results described thus far in Table~\ref{table:results}.

%\begin{tabular}{|c|c|c|c|c|}
%\hline 
%Model & SFS and \easy{} bonus & SFS and \hard{} bonus & GE and %\easy{} bonus & GE and \hard{} bonus \\
%\hline
%Upper bound & 2 & 2$^*$ & 3$^*$ ; $\frac{21^*}{8}$ & $4$ for %2-lookahead\\
%\hline
%Lower bound & $2$ ($T=2$)  &  & $\beta$ ($T$ unbounded) & %$\infty$\\
%& $\alpha$ ($T$ unbounded)  & 2 & $\gamma$ ($T$ unbounded) & %$3$ for 2-lookahead \\
%\hline
%\end{tabular}

\begin{table}[t]
	\centering
	\def\arraystretch{1.1}
	\begin{tabular}{ccc}
		\toprule
		&static set of feasible solutions&general evolution\\
		\midrule
		% Bounds hamming bonus
		\multirow{2}{*}{Hamming bonus}& $c^\star=2$ & $1+\sqrt{2}\leq c^\star\leq3+o(1)$ \\
		% References hamming bonus
		&  Theorems \ref{thm:static-hamming-lower} and \ref{thm:static-hamming-upper} & Theorems \ref{thm:general-hamming-lower} and \ref{thm:general-hamming-upper}\\[1ex]  
		\midrule
		% Bounds intersection bonus
		\multirow{3}{*}{intersection bonus}& \multirow{2}{*}{$2\leq c^\star\leq 2+o(1)$} & $c^\star=\infty$\\
		& & $c^\star=4$ for $1$-lookahead\\		
		% References intersection bonus
		&  Theorems \ref{thm:static-intersection-lower} and \ref{thm:static-intersection-upper} & Theorems \ref{thm:general-intersection-lower}, \ref{thm:general-intersection-lower1}, and \ref{thm:general-intersection-upper} \\[1ex]  
		\bottomrule
	\end{tabular}
	\medskip
	\caption{Our bounds on the best-possible competitive ratio $c^\star$ for the different models. The Landau symbol is with respect to $T\rightarrow\infty$.}
	\label{table:results}
\end{table}

We note that the lower bounds mentioned for the Hamming model are only shown for a specific fixed number of time steps, and that in general there is no trivial way of extending these bounds to a larger number of time steps. 
%Only in the model with general evolution and intersection bonus, one can add extra time steps (in which only the empty set is feasible and has no profit) that neither add to the algorithm's nor the optimum's profit.
One may however argue that the large-$T$ regime is in fact the interesting one for both practical applications and in theory, the latter because the effect of having a first time step without bonus vanishes. At the end of the respective sections, we therefore give asymptotical lower bounds of $3/2$ and roughly $1.696$ for the cases of a static set of feasible solutions and general evolutions, respectively. These bounds are non-trivial, but we do not know if they are tight.

It is plausible that the aforementioned upper bounds can be improved if extra assumptions on characteristics of the objective function and the sets of feasible solutions are made. In Subsubsection~\ref{subsec:submodular}, we show that already very natural assumptions suffice: Assuming that at each time the feasible solutions are closed under taking subsets and the objective function is submodular, we give a $(21/8+o(1))$-competitive algorithm for the model with a general evolution and Hamming bonus, improving the previous $(3+o(1))$-competitive ratio. Our  lower bounds for general evolution and Hamming bonus in fact fulfill the extra assumptions.

In Section~\ref{sec:conclusion}, we summarize our results and mention directions for future research that we consider interesting.

%we first present a simplified $(3+o(1))$-competitive algorithm for the model with a general evolution and Hamming bonus. By balancing the parameters, we then improve the competitive ratio to $21/8+o(1)$. 

%SFS stands for static feasible set, GE for general evolution. Upper bounds with a star corresponds to asymptotic competitiveness (when $T$ and/or $n$ go(es) to infinity).

%In the GE and \easy{} bonus model, the ratio $\frac{21}{8}$ (improving the general 3-competitive ratio) corresponds to the case where we further assume that the objective function is submodular, and that a subset of a feasible solution is feasible.

\section{Model of a Static Set of Feasible Solutions}\label{sec:static}

We consider here the model of evolution where only profits change over time: $\mathcal{F}_t=\mathcal{F}$ for any $t$. 
We first consider the \easy{} bonus model and show a simple 2-competitive algorithm. We will then show that a (asymptotic) competitive ratio of 2 can also be achieved in the intersection bonus model using a more involved algorithm. In both cases, this ratio 2 is shown to be (asymptotically) optimal.  

\subsection{\easy{}-Bonus Model}
\label{subsec:static-hamming}

\begin{theorem}\label{thm:static-hamming-upper}
In the SSFS model with Hamming bonus, there is a 2-competitive algorithm. 
\end{theorem}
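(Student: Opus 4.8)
The plan is to analyze exactly the greedy algorithm sketched in the overview: at each time step $t$, compute two candidate solutions — the ``stable'' one $S_{t-1}$ (carried over from the previous step, which is feasible since $\mathcal{F}_t=\mathcal{F}$) and a profit-optimal one $O_t\in\arg\max_{S\in\mathcal{F}}p_t(S)$ — and output whichever of the two yields the larger contribution. More precisely, since at step $t$ the relevant quantities are the profit $p_t(S_t)$ plus the bonuses $b(S_{t-1},S_t)$ and $b(S_t,S_{t+1})$, and the latter depends on the future, I would have the algorithm at step $t$ simply pick $S_t\in\{S_{t-1}, O_t\}$ maximizing $p_t(S_t)+b(S_{t-1},S_t)$ (with $S_0$ undefined, so $S_1=O_1$). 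The point of the Hamming bonus is that every transition bonus is at least $0$ and at most $n$, and crucially $b(S,S')\le n \le b(S,S)$ trivially — the ``do nothing'' option always collects the full possible bonus with its predecessor.

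The key inequalities I would establish are, first, a lower bound on the algorithm's value and, second, an upper bound on the optimum. For the optimum $\mathcal{S}^*=(S_1^*,\dots,S_T^*)$, each transition bonus $b(S_t^*,S_{t+1}^*)\le n$, so $f(\mathcal{S}^*)\le \sum_{t=1}^T p_t(S_t^*) + (T-1)n \le \sum_{t=1}^T p_t(O_t) + (T-1)n$. For the algorithm, I would argue that at each step $t\ge 2$ the contribution $p_t(S_t)+b(S_{t-1},S_t)$ is at least $\max\{p_t(O_t),\, b(S_{t-1},S_{t-1})\} \ge \max\{p_t(O_t), n\}$ — wait, more carefully: the candidate $S_t=S_{t-1}$ gives $p_t(S_{t-1})+b(S_{t-1},S_{t-1}) = p_t(S_{t-1}) + n \ge n$, and the candidate $S_t=O_t$ gives $p_t(O_t)+b(S_{t-1},O_t)\ge p_t(O_t)$. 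Hence each step's contribution is at least $\tfrac12\big(p_t(O_t)+n\big)$ by averaging the two candidate values (taking the max of two numbers is at least their average). Summing over $t=2,\dots,T$ and adding $p_1(O_1)=p_1(S_1)\ge p_1(O_1)$ at step $1$, one gets $f(\text{ALG})\ge \tfrac12\sum_{t=1}^T p_t(O_t) + \tfrac12(T-1)n \ge \tfrac12 f(\mathcal{S}^*)$, which is the claim.

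I would be slightly careful about the bookkeeping at the first and last time steps (no incoming bonus at $t=1$, no outgoing bonus at $t=T$) to make sure the telescoping of bonus terms is assigned consistently — the cleanest way is to assign to ``step $t$'' the profit $p_t(S_t)$ together with the single bonus $b(S_{t-1},S_t)$ for $t\ge 2$, so that $f(\mathcal{S})=\sum_{t=1}^T p_t(S_t)+\sum_{t=2}^T b(S_{t-1},S_t)$, and then the per-step bound above applies uniformly for $t\ge 2$ with the $t=1$ term handled separately. The main (and only mild) obstacle is verifying that the ``max is at least the average'' step combines correctly with the separate treatment of $t=1$; there is no real difficulty here since $p_1(O_1)\ge \tfrac12 p_1(O_1)$ and there is simply no $n$-term to collect at $t=1$, and the optimum's bound has exactly $(T-1)$ copies of $n$ to match the $(T-1)$ steps $t=2,\dots,T$ where the algorithm collects its $n$. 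This yields the $2$-competitiveness stated in Theorem~\ref{thm:static-hamming-upper}.
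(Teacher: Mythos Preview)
Your proof is correct and follows essentially the same approach as the paper: both use the greedy rule of choosing at each step between the previous solution (yielding bonus $n$) and a profit-optimal solution, bound the optimum by $\sum_t p_t(O_t)+(T-1)n$, and lower-bound the algorithm's per-step contribution for $t\ge 2$ by $\tfrac12(p_t(O_t)+n)$ via a max-versus-average argument. The only cosmetic difference is that the paper triggers the switch by the threshold test $p_t(S^*_t)>n$ rather than by maximizing $p_t(S_t)+b(S_{t-1},S_t)$, but the analysis is identical.
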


\begin{proof}
We consider the very simple following algorithm. At each time step $t$, the algorithm computes an optimal solution $S^*_t$ with associated profit $p_t(S^*_t)$. At $t=1$ we fix $S_1=S^*_1$. For $t>1$, if $p_t(S^*_t) > n$ then fix $S_t=S^*_t$, otherwise fix $S_t=S^*_{t-1}$ (which is possible thanks to the fact that the set of feasible solutions does not change).

Let $f^*$ be the optimal value. Since any solution sequence gets profit at most $p_t(S^*_t)$ at time $t$, and bonus at most $n$ between two consecutive time steps, we get $f^* \leq \sum_{t=1}^T p(S^*_t) + n(T-1)$. 

By construction, at time $t>1$, either the algorithm gets profit $p_t(S^*_t)$ when $p_t(S^*_t)>n$, or bonus (from $t-1$) $n$ when $n\geq p_t(S^*_t)$. So in any case the algorithm gets profit plus bonus at least $\frac{p_t(S^*_t)+n}{2}$. At time $1$ it gets profit at least $p_1(S^*_1)$. So 

$$f(S_1\dots,S_T)\geq p_1(S^*_1)+\sum_{t=2}^T \frac{p_t(S^*_t)}{2}+ \frac{n(T-1)}{2}\geq \frac{f^*}{2},$$
which completes the proof.\end{proof}

\begin{theorem}\label{thm:static-hamming-lower}
Consider the SSFS model with Hamming bonus. For any $\epsilon>0$, there is no $(2-\epsilon)$-competitive algorithm, even if there are only $2$ time steps. 
\end{theorem}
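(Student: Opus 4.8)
The plan is to give, for each $\epsilon>0$, a two‑step SSFS instance on which every deterministic online algorithm is forced below competitiveness $2-\epsilon$. I would take the ground set $N=\{1,\dots,n\}$ with $n$ large (to be fixed in terms of $\epsilon$), and the static family $\mathcal F=\{\emptyset,\{1\},N\}$, which is admissible since $\emptyset\in\mathcal F$. In the first step I would set $p_1\equiv 0$ on $\mathcal F$, so that the algorithm's choice $S_1\in\mathcal F$ produces no first‑step profit but fixes the reference point against which the Hamming bonus of the second step is measured. In the second step the adversary reacts to $S_1$ by making one single solution $X\in\mathcal F$ very profitable, $p_2(X)=n$, with all other second‑step profits $0$, where $X$ is chosen as Hamming‑far from $S_1$ as $\mathcal F$ permits: $X=N$ if $S_1=\emptyset$, $X=N$ if $S_1=\{1\}$, and $X=\{1\}$ if $S_1=N$. (Since the algorithm is deterministic, $S_1$ is determined by the fixed first instance, so this is a legitimate adversary.)

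Next I would bound the two values. The sequence that plays $X$ at both steps is feasible by static feasibility and has value $p_2(X)+b(X,X)=n+n=2n$, so $f^\ast\ge 2n$ in every branch. For the algorithm, its value is $p_1(S_1)+p_2(S_2)+b(S_1,S_2)=p_2(S_2)+b(S_1,S_2)\le\max_{S\in\mathcal F}\bigl(p_2(S)+b(S_1,S)\bigr)$. Only $X$ carries second‑step profit, so this maximum is $\le\max\bigl(b(S_1,S_1),\,n+b(S_1,X)\bigr)$; here $b(S_1,S_1)=n$ for every $S_1$, and a direct check of the three cases gives $b(\emptyset,N)=0$ and $b(N,\{1\})=b(\{1\},N)=1$. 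Hence the algorithm's value is at most $n+1$, while $f^\ast\ge 2n$, so its ratio on this instance is at least $\tfrac{2n}{n+1}$. Choosing $n>\tfrac{2-\epsilon}{\epsilon}$ makes $\tfrac{2n}{n+1}>2-\epsilon$, which is the claim. (If the model insists on strictly positive profits, one replaces the zero profits by an arbitrarily small $\delta>0$; this perturbs the two values by $O(\delta)$ and changes nothing else.)

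The one genuinely delicate point, and the reason the construction is asymptotic rather than a single small instance, is the case $S_1=N$. The ``perfectly wrong'' punishment would be to reward $\overline{S_1}=\emptyset$, but $\emptyset$ cannot be made profitable, and in any case $b(N,\emptyset)=0$ would not help, since the algorithm could then simply keep playing $\emptyset$ for self‑bonus $n$. So instead we punish with the singleton $\{1\}$, whose Hamming distance from $N$ is $n-1$; the algorithm can salvage a residual bonus of $1$, which is exactly what produces the additive ``$+1$'' in the bound and forces $n\to\infty$ for the ratio to approach $2$. This also explains the precise choice of $\mathcal F$: we need a solution very close to $\emptyset$ (the singleton) to punish $S_1=N$, and a solution very close to $N$ (namely $N$ itself) to punish $S_1=\emptyset$ and $S_1=\{1\}$, together with $\emptyset$, which the model forces us to include.
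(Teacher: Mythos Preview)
Your proof is correct and follows essentially the same approach as the paper: a three-solution static family, zero first-step profit, and an adversary who places all second-step profit on the feasible set Hamming-farthest from the algorithm's first choice. The only cosmetic difference is that the paper takes the two non-empty solutions to be the complementary sets $\{1\}$ and $\{2,\dots,n\}$ rather than your $\{1\}$ and $N$, yielding the ratio $(2n-1)/n$ in place of your $2n/(n+1)$; both tend to $2$ and the argument structure is identical.
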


\begin{proof}
We consider a set $N=\{1,2,\dots,n\}$ of $n=1+\left\lceil \frac{1}{\epsilon}\right\rceil$ objects, and $T=2$ time steps. There are three feasible solutions: $S^0=\emptyset$, $S^1=\{1\}$ and $S^2=\{2,\dots,n\}$. At $t=1$, all the profits are 0. Let us consider an on-line algorithm \texttt{A}. We consider the three possibilities for the algorithm at time 1:
\begin{itemize}
\item At time 1, \texttt{A} chooses $S^0$: at time 2 we give profit 1 to all objects. If \texttt{A} takes no object at time 2, it gets profit 0 and bonus $n$. If it takes $S^1$, it gets profit 1 and bonus $n-1$. If it takes $S^2$, it gets profit $n-1$ and bonus 1, so in any case the computed solution has value $n$. The solution consisting of taking $S^2$ at both time steps has profit $n-1$ and bonus $n$, so value $2n-1$.
\item At time 1, \texttt{A} chooses $S^1$: at time 2 we give profit 0 to object 1, and profit 1 to all other objects. Then, if the algorithm takes $S^0$ (resp, $S^1$, $S^2$), at time 2 its gets value $n-1$ (resp, $n$, $n-1$) while the solution consisting of taking $S^2$ at both time steps has value $2n-1$.
\item At time 1, \texttt{A} chooses $S^1$: at time 2 we give profit $n$ to object 1, and 0 to all other objects. Then if the algorithm takes $S^0$ (resp, $S^1$, $S^2$) at time 2 its gets value $n-1$ (resp, $n$, $n$), while  the solution consisting of taking $S^1$ at both time steps has value $2n$.
\end{itemize}
In any case, the ratio is at least $\frac{2n-1}{n}=2-\frac{1}{n}>2-\epsilon$.
\end{proof}

We complement this lower bound with an asymptotical result for large $T$.

\begin{theorem}\label{th:lbmod1}
	Consider the SSFS model with Hamming bonus. For every $\epsilon>0$, there is a $T_\epsilon$ such that, for each number of time steps $T\geq T_\epsilon$, there is no $(3/2-\epsilon)$-competitive algorithm.
\end{theorem}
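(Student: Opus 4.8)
The plan is to prove, via an adaptive adversary, the (slightly stronger) fact that for \emph{every} $T\ge 2$ no deterministic online algorithm beats competitive ratio $3/2$; this is in the spirit of the proof of Theorem~\ref{thm:static-hamming-lower}, but the punishment is now spread evenly over all time steps. I would fix an even $n\ge 2$, partition the ground set as $N=A\cup B$ with $|A|=|B|=n/2$, and use the time‑independent feasible family $\mathcal F=\{\emptyset,A,B\}$ (a legal SSFS instance). The only bonuses that matter are $b(A,A)=b(B,B)=b(\emptyset,\emptyset)=n$, $b(A,B)=0$, and $b(A,\emptyset)=b(B,\emptyset)=n/2$. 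Profits are revealed adaptively: $p_1\equiv 0$; and for $t\ge 2$, after the algorithm's previous choice $S_{t-1}$ has been revealed, the adversary plays the ``mirror'' profit vector $(p_t(A),p_t(B),p_t(\emptyset))$ equal to $(0,n,0)$ if $S_{t-1}=A$, to $(n,0,0)$ if $S_{t-1}=B$, and to $(n/2,n/2,0)$ if $S_{t-1}=\emptyset$.

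First I would bound the algorithm. A one‑step inspection of the at most nine pairs $(S_{t-1},S_t)$ shows $p_t(S_t)+b(S_{t-1},S_t)\le n$ for every $t\ge 2$: staying put puts the algorithm on a zero‑profit set but earns bonus $n$; switching between $A$ and $B$ earns profit $n$ but bonus $0$; moving from $\emptyset$ to $A$ or $B$ earns $n/2+n/2=n$; and the only remaining transitions, $A\to\emptyset$ and $B\to\emptyset$, earn just $n/2$. Since $p_1(S_1)=0$, the algorithm's output $\mathcal S=(S_1,\dots,S_T)$ satisfies $f(\mathcal S)\le (T-1)n$.

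Next I would bound the optimum from below. Fixing the algorithm's trajectory $S_1,\dots,S_T$ (which determines the whole instance), let $\alpha$, $\beta$, $\gamma$ count the indices $t'\in\{1,\dots,T-1\}$ with $S_{t'}$ equal to $A$, $B$, $\emptyset$, so $\alpha+\beta+\gamma=T-1$. The constant solutions ``always $A$'' and ``always $B$'' are feasible, and by the mirror rule their values are $V_A=(T-1)n+n\beta+\frac n2\gamma$ and $V_B=(T-1)n+n\alpha+\frac n2\gamma$. Using $\max(x,y)\ge (x+y)/2$,
\[
 f^{*}\ \ge\ \max(V_A,V_B)\ \ge\ (T-1)n+\frac n2\gamma+n\cdot\frac{\alpha+\beta}{2}\ =\ (T-1)n+\frac n2(T-1)\ =\ \frac32(T-1)n .
\]
Hence $f(\mathcal S)\le \frac23 f^{*}$, so no online algorithm is $c$‑competitive for any $c<3/2$; in particular none is $(3/2-\epsilon)$‑competitive, and one may even take $T_\epsilon=2$.

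The only delicate design decision — and the step I would treat most carefully — is the profit vector on the ``$\emptyset$‑steps''. Paying both $A$ and $B$ exactly $n/2$ there is forced from two sides: it must be small enough that the move $\emptyset\to A$ still nets only $n$, so that parking the algorithm in $\emptyset$ to harvest the free bonus $n$ gives it no advantage (the per‑step bound $n$ stays intact); and it must be large enough that the telescoping identity $\frac n2\gamma+n\cdot\frac{\alpha+\beta}{2}=\frac n2(T-1)$ holds, which is precisely what makes the optimum's guarantee $\frac32(T-1)n$ independent of how the algorithm distributes its time among $A$, $B$, and $\emptyset$. Once this calibration and the nine‑case per‑step inequality are verified, the remainder is bookkeeping.
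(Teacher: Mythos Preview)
Your proof is correct, and it is both simpler and stronger than the paper's: you obtain the exact ratio $3/2$ for every $T\ge 2$, not just asymptotically.

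The two arguments differ in structure. The paper works on the two-element ground set with $\mathcal F=\{\emptyset,\{1\},\{2\}\}$ and organizes the time horizon into adaptive \emph{phases}: at the start of a phase one item carries profit $3$ and the other $1$; afterwards the item the algorithm holds has profit $0$ and the other $1$, until the algorithm switches, which closes the phase. One then checks that the optimum-to-algorithm ratio is at least $3/2$ within each regularly ending phase, and absorbs the irregular first segment and a possibly truncated final phase into the $o(1)$ term as $T\to\infty$. Your argument avoids phases entirely: the symmetric ``mirror'' profits $(0,n)$, $(n,0)$, $(n/2,n/2)$ give a clean per-step cap of $n$ on the algorithm, and averaging the two constant solutions ``always $A$'' and ``always $B$'' yields the matching $\tfrac{3}{2}(T-1)n$ lower bound on the optimum with no boundary losses. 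The calibration you flag (paying $n/2$ on the $\emptyset$-steps) is exactly what makes the averaging identity $\frac{n}{2}\gamma+\frac{n}{2}(\alpha+\beta)=\frac{n}{2}(T-1)$ independent of the algorithm's trajectory, and it is the key idea. What the paper's phase machinery buys in exchange for the extra complexity is a template that scales: essentially the same phase structure, with different profit parameters, drives the $\approx 1.696$ bound in the general-evolution model (Theorem~\ref{th:lwasymp}).
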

\begin{proof}
	Let $N:=\{1,2\}$. The static set of feasible solutions is $\mathcal{F}=\{\emptyset,\{1\},\{2\}\}$. Initially, $p_1(\{1\})=0$ and $p_1(\{2\})=1$. As long as the algorithm has not picked item $1$ until some time $t$, we set $p_{t+1}(\{1\})=0$ and $p_{t+1}(\{2\})=1$ again. Note that, in order to be $(3/2-\epsilon)$-competitive, the algorithm however has to pick item $1$ eventually. Further, the ratio between the profit of the optimum and the algorithm during this part is $3/2-o(1)$ as the length of this part approaches $\infty$.
	
	The remaining time horizon is partitioned into contiguous \emph{phases}. Consider a phase that starts at time $t$. The invariant at the beginning of the phase is that both the algorithm and the optimum have picked the same item in the previous time step $t-1$. Let this item be w.l.o.g.\ item $2$; the other case is symmetric. Then $p_t(\{2\})=1$ and $p_t(\{1\})=3$. By the same reasoning as above, we can assume the algorithm chooses an item at $t$. Let $i\in\{1,2\}$ be that item. Then $p_{t+1}(\{i\})=0$ and $p_{t+1}(\{3-i\})=1$. As long as the algorithm is still not picking item $3-i$ during the time interval $[t+1,t']$, $p_{t'+1}(\{i\})=0$ and $p_{t'+1}(\{3-i\})=1$. Once the algorithm picks item $3-i$ at some time, the phase ends \emph{regularly}; otherwise it ends \emph{by default}.
	
	Now consider a phase of length $\ell$ that ends regularly (note $\ell\geq 2$). We claim that the values of the algorithm and the optimum have a ratio of at least $3/2$. This is because of the following estimates on the algorithm's and optimum's value:
	\begin{itemize}
	    \item In either case for $i$, the algorithm obtains a value of $3$ in time step $t$. Furthermore, the total bonus in all subsequent time steps is $(\ell-2)\cdot 2$, because the algorithm has to switch from item $i$ to item $3-i$. There is an additional profit of $1$ at time $t+\ell-1$. Therefore, the total value is $4+(\ell-2)\cdot 2$
	    \item The value of the optimum is at least $6+(\ell-2)\cdot 3$: It chooses item $3-i$ already at time $t$ and keeps it until time $t+\ell-1$, obtaining a value of $3$ in that time step and another $3$ in each subsequent time step.
	\end{itemize}
	This proves the claim and thereby the theorem as a phase that ends by default can be extended to one that ends regularly by modifying the optimum's and algorithm's values by constants.
\end{proof}

\subsection{\hard{}-Bonus Model}
\label{subsec:static-intersection}

In the intersection-bonus model things get harder since an optimal solution $S^*_t$ may be of small size and then gives very small (potential) bonus for the next step. As a matter of fact, the algorithm of the previous section has unbounded competitive ratio in this case: take a large number $n$ of objects, $\mathcal{F}=2^N$, and at time 1 all objects have profit 0 up to one which has profit $\epsilon$. The algorithm will take this object (instead of taking $n-1$ objects of profit 0) and then potentially get bonus at most 1 instead of $n-1$.

Thus we shall put an incentive for the algorithm to take solutions of large size, in order to have a chance to get a large bonus. We define the following algorithm called \texttt{MP-Algo} (for Modified Profit algorithm). Informally, at each time step $t$, the algorithm computes an optimal solution with a modified objective function $p'_{t}$. These modifications take into account (1) the objects taken at time $t-1$ (2) an incentive to take a lot of objects. Formally, \texttt{MP-Algo} works as follows:

\begin{enumerate}
    \item At $t=1$: let $p'_{1}(S)=p_{1}(S)+|S|$. Choose $S_1$ as an optimal solution for the problem with modified profits $p'_{1}$. 
    \item For $t$ from 2 to $T-1$: let $p'_{t}(S)=p_{t}(S)+|S\cap S_{t-1}|+|S|$. Choose $S_t$ as an optimal solution for the problem  with modified profit function $p'_{t}$. 
    \item At $t=T$: let $p'_{T}(S)=p_{T}(S)+|S\cap S_{T-1}|$. Choose $S_T$ as an optimal solution with modified profit function $p'_{T}$. 
\end{enumerate}
The cases $t=1$ and $t=T$ are specific since there is no previous solution for $t=1$, and no future solution for $t=T$.

\begin{theorem}\label{thm:static-intersection-upper}
In the SSFS model with intersection bonus, \texttt{MP-Algo} is $\left(\frac{2}{1-1/(T-1)}\right)$-competitive.
\end{theorem}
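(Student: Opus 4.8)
The plan is to compare the value $f(\mathcal{S})$ of the algorithm's output $\mathcal{S}=(S_1,\dots,S_T)$ with the total modified profit $\Sigma:=\sum_{t=1}^T p'_t(S_t)$ it collects, and to sandwich $f^*$ in between. First I would set up shorthand: $a_t:=p_t(S_t)$, $m_t:=|S_t|$, $\beta_t:=|S_t\cap S_{t+1}|$, with $a^*_t,\beta^*_t$ the analogues for a fixed optimal sequence $\mathcal{S}^*$, and $P'_t:=p'_t(S_t)$. Expanding the three cases of $p'_t$ gives the identity $\Sigma=f(\mathcal{S})+M$ with $M:=\sum_{t=1}^{T-1}m_t$. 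For the other bound, since $S_t$ is $p'_t$-optimal over all feasible sets and $S^*_t$ is feasible at time $t$, we have $p'_t(S_t)\ge p'_t(S^*_t)$; summing these, discarding the nonnegative cross-terms $|S^*_t\cap S_{t-1}|$, and using $|S^*_t|\ge|S^*_t\cap S^*_{t+1}|$ to dominate the optimum's bonuses, yields $f^*\le\Sigma$. So it suffices to prove $\Sigma\le\tfrac{2(T-1)}{T-2}f(\mathcal{S})$, i.e.\ $M\le\tfrac{T}{2(T-1)}\Sigma$.

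To control $M$ I would use two kinds of lower bounds on the $P'_t$. The conceptually crucial one exploits the SSFS model: for \emph{every} $s$, the algorithm's set $S_s$ is feasible at \emph{every} time step, so $p'_t(S_t)\ge p'_t(S_s)\ge|S_s|=m_s$; taking the maximum over $s$ gives $P'_t\ge\mu$ for all $1\le t\le T-1$, where $\mu:=\max_{1\le t\le T-1}m_t$. (This is exactly where the static feasibility structure is used; if one only uses the next inequality, the analysis degrades to a ratio around $3$.) The second kind compares $S_t$ with $S_{t-1}$: since $p'_t(S_{t-1})=p_t(S_{t-1})+2m_{t-1}$ for $2\le t\le T-1$ and $p'_T(S_{T-1})=p_T(S_{T-1})+m_{T-1}$, we obtain $P'_t\ge 2m_{t-1}$ for $2\le t\le T-1$ and $P'_T\ge m_{T-1}$. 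Combining the two kinds, $\Sigma\ge\mu+\sum_{j=1}^{T-2}\max(2m_j,\mu)+m_{T-1}$.

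The remaining step, which I expect to be the main obstacle, is the purely arithmetic inequality
\[ \mu+\sum_{j=1}^{T-2}\max(2m_j,\mu)+m_{T-1}\ \ge\ \frac{2(T-1)}{T}\Bigl(\sum_{j=1}^{T-2}m_j+m_{T-1}\Bigr), \]
valid for all $0\le m_j,m_{T-1}\le\mu$. I would prove it via the termwise estimate $\max(2m_j,\mu)\ge\tfrac{2(T-1)}{T}m_j+\tfrac{\mu}{T}$, verified by a two-line case distinction on whether $2m_j\ge\mu$ (then $\max=2m_j$ and the claim reduces to $2m_j\ge\mu$) or $2m_j\le\mu$ (then $\max=\mu$ and it reduces to $\mu\ge 2m_j$). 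Summing over $j$ accounts for the $\sum_j m_j$ part of the right-hand side and leaves a surplus of $\tfrac{(T-2)\mu}{T}$, after which the residual inequality $\mu+\tfrac{(T-2)\mu}{T}+m_{T-1}\ge\tfrac{2(T-1)}{T}m_{T-1}$ follows from $\mu\ge m_{T-1}$, as it simplifies to $2(T-1)\mu\ge(T-2)m_{T-1}$. Granting this, $\Sigma\ge\tfrac{2(T-1)}{T}M$, hence $f(\mathcal{S})=\Sigma-M\ge\tfrac{T-2}{2(T-1)}\Sigma\ge\tfrac{T-2}{2(T-1)}f^*$, i.e.\ \texttt{MP-Algo} is $\bigl(\tfrac{2}{1-1/(T-1)}\bigr)$-competitive.
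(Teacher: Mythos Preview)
Your proof is correct and reaches the same competitive ratio, but the route you take in Step~4 is genuinely different from the paper's. Both proofs share the identity $\Sigma=f(\mathcal{S})+M$ and the bound $f^*\le\Sigma$ (this is exactly the paper's Equation~(6)). The divergence is in how one shows that $M$ cannot be too large relative to $f(\mathcal{S})$. The paper only compares $S_t$ with the immediately preceding set $S_{t-1}$, obtaining a telescoping-type inequality $f(\mathcal{S})\ge\sum_{t=2}^{T-2}|S_t|$; the single leftover term $|S_{T-1}|$ is then absorbed via the separate observation $f^*\ge(T-1)\,|S_{T-1}|$, which again uses SSFS by taking $S_{T-1}$ at every step. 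You instead exploit SSFS more aggressively up front: since every $S_s$ is feasible at every $t$, you get the uniform lower bound $P'_t\ge\mu=\max_s m_s$ for $t\le T-1$, and then combine it with the local bound $P'_t\ge 2m_{t-1}$ through the clean termwise convex estimate $\max(2m_j,\mu)\ge\tfrac{2(T-1)}{T}m_j+\tfrac{\mu}{T}$. What your approach buys is a self-contained inequality $\Sigma\ge\tfrac{2(T-1)}{T}M$ that never needs to invoke $f^*$ again; what the paper's approach buys is that it never appeals to the global maximum $\mu$, only to the neighbouring solution, which keeps the comparison purely local until the very last step.
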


\begin{proof}

 Let $(\hat{S}_1,\dots,\hat{S}_T)$ be an optimal sequence. Since $S_t$ is optimal with respect to $p'_t$, for $t=2,\dots,T-1$ we have:
 \begin{equation}\label{eqBruno1}
 p'_t(S_t)=p_t(S_t)+|S_t\cap S_{t-1}|+|S_t|\geq p'_t(\hat{S}_t)\geq p_t(\hat{S}_t)+|\hat{S}_t|.
 \end{equation}
Since $S_{t-1}$ is also a feasible solution at time $t$, we have:
 \begin{equation}\label{eqBruno2}
 p'_t(S_t)=p_t(S_t)+|S_t\cap S_{t-1}|+|S_t|\geq p_t(S_{t-1}) \geq 2|S_{t-1}|.
 \end{equation}
Similarly, at $t=T$  $p'_T(S)=p_T(S)+|S\cap S_{t-1}|$ so 
 \begin{eqnarray}
 p_T(S_T)+|S_T\cap S_{T-1}| & \geq & p_T(\hat{S}_T), \label{eqBruno3}\\
  p_T(S_T)+|S_T\cap S_{T-1}|&\geq& |S_{T-1}|. \label{eqBruno4}
 \end{eqnarray}
At $t=1$  $p'_t(S)=p_t(S)+|S|$, so 
 \begin{equation}\label{eqBruno5}
 p_1(S_1)+|S_1|\geq p_1(\hat{S}_1)+|\hat{S}_1|.
 \end{equation}
Now, note that $|S_t\cap S_{t-1}|$ is the transition bonus of the computed solution between $t-1$ and~$t$. By summing Equation~(\ref{eqBruno1}) for $t=2,\dots,T-1$, Equation~(\ref{eqBruno3}) and Equation~(\ref{eqBruno5}), we deduce:
\begin{equation}\nonumber
    f(S_1,\dots,S_T)+\sum_{t=1}^{T-1}|S_t| \geq \sum_{t=1}^T p_t(\hat{S}_t)+\sum_{t=1}^{T-1}|\hat{S}_t|.
\end{equation}
Since in the optimal sequence the transition bonus between time $t$ and $t+1$ is at most $|\hat{S}_t|$, we get:
\begin{equation}\label{eqBruno6}
    f(S_1,\dots,S_T)+\sum_{t=1}^{T-1}|S_t| \geq f(\hat{S}_1,\dots,\hat{S}_T).
\end{equation}
Now we sum Equation~(\ref{eqBruno2}) for $t=2,\dots,T-1$ and Equation~(\ref{eqBruno4}):
\begin{equation}\nonumber
    f(S_1,\dots,S_T)+\sum_{t=2}^{T-1}|S_t| \geq 2\sum_{t=2}^{T-1}|S_{t-1}|+|S_{T-1}|.
\end{equation}
From this we easily derive:
\begin{equation}\label{eqBruno7}
    f(S_1,\dots,S_T) \geq \sum_{t=2}^{T-2}|S_{t}|.
\end{equation}
By summing Equations~(\ref{eqBruno6}) and~(\ref{eqBruno7}) we have $2f(S_1,\dots,S_T)\geq f(\hat{S}_1,\dots,\hat{S}_T)-|S_{T-1}|$. The competitive ratio follows from the fact that $f(\hat{S}_1,\dots,\hat{S}_T)\geq (T-1)|S_{T-1}|$ (since $S_{T-1}$ is feasible for all time steps).
\end{proof}

We note that competitive ratio 2 can be derived with a similar analysis when 
the number of time steps is 2 or 3. We show a matching lower bound (which is also valid in the asymptotic setting).

\begin{theorem}\label{thm:static-intersection-lower}
Consider the SSFS model with intersection bonus. For any $\epsilon>0$ and number of time steps $T=\lceil1/\epsilon\rceil$, there is no $(2-\epsilon)$-competitive algorithm.
\end{theorem}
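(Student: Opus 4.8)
The plan is to use an adaptive adversary against a static family made of the empty set and $T$ pairwise disjoint ``atoms''; the simplest realization is $N=\{1,\dots,T\}$ and $\mathcal{F}=\{\emptyset\}\cup\{\{i\}:i\in N\}$ (static, so $\mathcal{F}_t=\mathcal{F}$ for all $t$), with $T=\lceil1/\epsilon\rceil$; we may assume $\epsilon<1$, so $T\ge 2$ (larger $\epsilon$ makes the claim uninteresting). First I would set all step-$1$ profits to $0$, and then, at each step $t\ge 2$, after observing the algorithm's solution $S_{t-1}$, ``switch off'' exactly the atom the algorithm just held: if $S_{t-1}=\{j\}$ put $p_t(\{j\})=0$ and $p_t(\{i\})=1$ for every $i\ne j$; if $S_{t-1}=\emptyset$ leave all atoms at profit $1$; always $p_t(\emptyset)=0$. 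The point is that at each step $t\ge2$ the algorithm faces a dilemma: repeat $S_{t-1}$ (collect the bonus but no current profit, since its atom is disabled) or switch to a fresh atom (collect one unit of profit but forfeit the bonus).

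The core step is to pin down the algorithm's value regardless of its strategy. Since playing $\emptyset$ is dominated, we may assume the algorithm always plays an atom; then for each $t\in\{1,\dots,T-1\}$ one has the exact identity $b(S_t,S_{t+1})+p_{t+1}(S_{t+1})=1$: if $S_t=S_{t+1}$ the bonus is $1$ and $p_{t+1}(S_{t+1})=0$ because $S_{t+1}$ is the disabled atom, while if $S_t\ne S_{t+1}$ the bonus is $0$ and $p_{t+1}(S_{t+1})=1$. Telescoping, $f(S_1,\dots,S_T)=p_1(S_1)+\sum_{t=1}^{T-1}\bigl(b(S_t,S_{t+1})+p_{t+1}(S_{t+1})\bigr)=0+(T-1)=T-1$, and any step in which the algorithm actually plays $\emptyset$ only decreases this; hence every online algorithm achieves value at most $T-1$ on this instance.

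For the optimum, note that the algorithm's first $T-1$ solutions $S_1,\dots,S_{T-1}$ involve at most $T-1$ distinct atoms out of the $T$ available, so some atom $\{j^\star\}$ never occurs among them, and therefore is never the disabled atom, i.e.\ $p_t(\{j^\star\})=1$ for every $t\ge 2$ (and $p_1(\{j^\star\})=0$). The constant sequence $(\{j^\star\},\dots,\{j^\star\})$ is feasible and has value $\sum_{t=1}^{T}p_t(\{j^\star\})+\sum_{t=1}^{T-1}b(\{j^\star\},\{j^\star\})=(T-1)+(T-1)=2(T-1)$, so $f^*\ge 2(T-1)$. Thus the competitive ratio is at least $\frac{2(T-1)}{T-1}=2>2-\epsilon$, and since the adversary's responses are defined purely from the algorithm's past choices this holds for every online algorithm; in particular no $(2-\epsilon)$-competitive algorithm exists for $T=\lceil1/\epsilon\rceil$ time steps (indeed for every $T\ge 2$).

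The main thing to get right is the middle step: one must verify that \emph{no} deviation — picking $\emptyset$, ``wasting'' a step, or trying to re-enter a previously held atom to harvest a bonus — lets the algorithm beat value $T-1$, which is exactly what the identity $b(S_t,S_{t+1})+p_{t+1}(S_{t+1})=1$ captures; after that, only a pigeonhole argument is needed. A secondary remark: if one instead keeps all step-$1$ profits equal to $1$ (desirable if one insists on strictly positive profits), the same computation gives algorithm value $\le T$ and $f^*\ge 2T-1$, hence ratio $\ge \frac{2T-1}{T}=2-\frac1T\ge 2-\epsilon$ — this is the variant for which the dependence $T=\lceil1/\epsilon\rceil$ is genuinely needed.
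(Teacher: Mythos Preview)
Your proof is correct and follows essentially the same approach as the paper: the same ground set $N=\{1,\dots,T\}$, the same feasible family $\mathcal{F}=\{\emptyset\}\cup\{\{i\}\}$, the same switch-or-stay dilemma, and the same pigeonhole to exhibit an atom the optimum can hold throughout. The only differences are cosmetic: the paper starts with all profits equal to $1$ and \emph{permanently} zeroes any object the algorithm has ever taken, obtaining algorithm value $\le T$ versus optimum $\ge 2T-1$ and hence ratio $2-1/T\ge 2-\epsilon$; your main version starts at $0$ and only disables the immediately preceding atom, which sharpens the bound to exactly $2$ (and you already note the paper's variant as your ``secondary remark'').
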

\begin{proof}
Let $\epsilon>0$ and $T=\left\lceil \frac{1}{\epsilon}\right\rceil$. We consider $T$ time steps,  and a set $N$ of $n=T$ objects. The objective function is linear, and feasible solutions are sets of at most 1 object. At $t=1$, the profit of each object is 1. Then, at each time step, if the algorithm takes an object, this object will have profit 0 until the end. While an object is not taken by the algorithm, its profit remains 1.

Since the algorithm takes at most one object at each time step, there is an object which is never taken till the last step. The solution of taking this object during all the process has value $2T-1$. But at each time step the algorithm either takes a new object (and gets no bonus) or keeps the previously taken object and gets no profit. So the value of the computed solution is at most $T$. The ratio is $2-\frac{1}{T}\geq 2-\epsilon$.
\end{proof}

%Note: we can get rid off the $-\frac{1}{T}$ in the lower bound by putting profit 0 at time 1 (and then 1 as in the previous construction). Then opt=$2(T-1)$ and algo $\leq T-1$. This gives LB of 2 asymptotic but also for any $T$ (eg. $T=2$).

\section{Model of General Evolution}\label{sec:general}

We consider in this section that the  set of feasible solutions may evolve over time. We will show that in the Hamming bonus model, we can still get constant competitive ratios, though ratios slightly worse than in the case where only profits could change over time. Then, we will tackle the intersection bonus model, showing that no constant competitive ratio can be achieved. However, with only $1$-lookahead we can get a constant competitive ratio.

\subsection{\easy{}-Bonus Model}
\label{subsec:general-hamming}

In this section we consider the \easy{} bonus model. We first show in Section~\ref{subsec:general} that there exists a $\left(3+\frac{1}{T-1}\right)$-competitive algorithm.
Interestingly, we then show in Section~\ref{subsec:submodular} that a slight assumption on the problem structure allows to improve the competitive ratio. More precisely, we achieve a 21/8 (asymptotic) competitive ratio if we assume that the objective function is submodular (including the additive case) and that a subset of a feasible solution is feasible. These assumptions are satisfied by all the problems mentioned in introduction.
We finally consider lower bounds in Section~\ref{sub:generallower}.

\subsubsection{General Case}\label{subsec:general}

We adapt the idea of the 2-competitive algorithm working for the Hamming bonus model for a static set of feasible solutions (Section~\ref{subsec:static-hamming}) to the current setting where the set of feasible solutions may change.  Let us consider the following algorithm \texttt{BestOrNothing}: at each time step $t$, \texttt{BestOrNothing} computes an optimal solution $S_t^*$ with associated profit $p_t(S_t^*)$ and compares it to $2$ times the maximum potential bonus, i.e to $2n$. It chooses $S^*_t$ if the associated profit is at least $2n$, otherwise it chooses $S_t=\emptyset$. A slight modification is applied for the last step  $T$.
%, where it is a nonsense to compare a solution profit to the bonus if the algorithm chose to take a solution with objects at the time step $T-1$, since it does not get any transition bonus this way. 

Formally, \texttt{BestOrNothing}  works as follows: 

\begin{enumerate}
\item For $t$ from 1 to $T-1$:
\begin{enumerate}
	\item Compute an optimal solution $S^*_t$ at time $t$ with associated profit $p_t(S_t^*)$ 
	\item If $p_t(S_t^*)\geq 2n$ set $S_t=S^*_t$, otherwise set $S_t=\emptyset$.
\end{enumerate}
\item At time $T$:
	\begin{enumerate}
	\item if $S_{T-1}=S^*_{T-1}$, then $S_T=S^*_T$. 
	\item Otherwise: if $p_t(S_t^*)\geq n$ set $S_T=S^*_T$, otherwise set $S_T=\emptyset$.
	\end{enumerate}
\end{enumerate}

We shown an upper bound on the competitive ratio achieved by this algorithm.

\begin{theorem}\label{thm:general-hamming-upper}
In the GE model with Hamming bonus, \texttt{BestOrNothing} is $\left(3+\frac{1}{T-1}\right)$-competitive.
\end{theorem}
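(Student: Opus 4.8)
The plan is to mimic the analysis of Theorem~\ref{thm:static-hamming-upper}, with the two modifications that the empty set now plays the role of the ``stable'' fallback (since $S_{t-1}$ need not be feasible at time $t$) and that the comparison threshold is raised from $n$ to $2n$ to absorb the bonus lost by switching to/from $\emptyset$. Let $(\hat S_1,\dots,\hat S_T)$ be an optimal sequence and $f^*$ its value. As before, the optimum is bounded crudely by $f^*\le\sum_{t=1}^T p_t(S^*_t)+n(T-1)$, using that no solution beats the per-step optimum in profit and no transition bonus exceeds $n$. The goal is to show $f(S_1,\dots,S_T)\ge \frac{1}{3+1/(T-1)}f^*$, equivalently $(3T-2)\,f(S_1,\dots,S_T)\ge (T-1)f^*$.

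First I would lower-bound the algorithm's value. Call a step $t\in\{1,\dots,T-1\}$ \emph{active} if $p_t(S^*_t)\ge 2n$ (so $S_t=S^*_t$) and \emph{idle} otherwise (so $S_t=\emptyset$). The key local observation is: at an active step $t<T$ the algorithm collects profit $p_t(S^*_t)\ge \tfrac12 p_t(S^*_t)+n$; at an idle step $t$ with $1<t<T$ it collects profit $0$ but the bonus $b(S_{t-1},S_t)$ plus $b(S_t,S_{t+1})$ is at least $n$ in total whenever at least one neighbour is also idle (two empty sets give bonus exactly $n$), and if a neighbour is active we instead have $p_t(S^*_t)<2n$, so $n>\tfrac12 p_t(S^*_t)$ and the ``missing'' profit is again dominated. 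The honest way to make this airtight is to charge, for each $t$ from $2$ to $T-1$, an amount $\tfrac12 p_t(S^*_t)+\tfrac{n}{2}$ against a combination of the algorithm's profit at $t$ and half the bonus on the two incident transitions, exactly as in Theorem~\ref{thm:static-hamming-upper} but tracking the $\emptyset$–$\emptyset$ transitions (which yield the full $n$) separately from $\emptyset$–$S^*$ transitions (which yield at least $|\overline{S_t}|\ge 0$, the deficit being paid by the factor-$2$ slack in the threshold). Steps $t=1$ and $t=T$ need the usual special treatment; step $T$ is handled by the algorithm's case distinction so that whenever $S_{T-1}=S^*_{T-1}$ we keep $S^*_T$ and otherwise use the threshold $n$, guaranteeing profit plus incident bonus at least $\tfrac12(p_T(S^*_T)+n)$ there as well, up to the now-familiar boundary loss. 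Summing these local inequalities yields
\begin{equation}\nonumber
f(S_1,\dots,S_T)\ \ge\ \frac12\sum_{t=1}^T p_t(S^*_t)+\frac{n(T-1)}{2}-cn
\end{equation}
for a small absolute constant $c$ coming from the first/last step; combined with the upper bound on $f^*$ this already gives $\approx 2$-competitiveness \emph{plus} an additive $O(n)$ error, and the whole point of the $2n$ threshold is to convert that additive error into the claimed multiplicative $3+1/(T-1)$.

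The step I expect to be the main obstacle — and the reason the ratio degrades from $2$ to $3$ — is controlling that additive $O(n)$ slack against $f^*$ rather than against $\sum_t p_t(S^*_t)$. The trick is that if the optimum ever switches solutions it ``pays'' for that switch, while if it never switches it keeps a single feasible $S$; in the GE model a set feasible at one time need not be feasible later, so one cannot assume a globally fixed stable solution as in Theorem~\ref{thm:static-intersection-upper}. Instead I would argue: either some $p_t(S^*_t)\ge 2n$, in which case $f^*\ge \sum p_t(S^*_t)\ge 2n$ is large enough that the $O(n)$ term is a bounded fraction of it and a short calculation closes the bound with ratio $3+\frac{1}{T-1}$; or every $p_t(S^*_t)<2n$, in which case \emph{every} step of the algorithm is idle, $S_t=\emptyset$ throughout, the algorithm's value is exactly $n(T-1)$, while $f^*\le \sum_t p_t(S^*_t)+n(T-1)< 2nT+n(T-1)$ wait — here one must be more careful and instead observe $f^*\le (2n-1)T+n(T-1)<(3T-2)\cdot\frac{n(T-1)}{T-1}$ type estimate, i.e.\ compare $f^*$ directly to $n(T-1)=f(S_1,\dots,S_T)$ and check the ratio is below $3+\frac1{T-1}$ for all $T$. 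Reconciling the two regimes — large-profit and all-idle — into the single clean bound $(3+\frac1{T-1})$ is the delicate bookkeeping; everything else is the same greedy exchange argument as in Section~\ref{subsec:static-hamming}.
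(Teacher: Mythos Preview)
Your central displayed inequality
\[
f(S_1,\dots,S_T)\ \ge\ \tfrac12\sum_{t=1}^T p_t(S^*_t)+\tfrac{n(T-1)}{2}-cn
\]
is false for any absolute constant $c$, and the charging argument you sketch cannot establish it. Consider $T$ steps alternating between ``active'' steps with $S^*_t=N$ and $p_t(S^*_t)=2n$, and ``idle'' steps with $p_t(S^*_t)=2n-\epsilon$. The algorithm picks $N,\emptyset,N,\emptyset,\dots$; every transition bonus is $0$ (Hamming distance $n$ between $N$ and $\emptyset$), so the algorithm's value is roughly $nT$. But the right-hand side of your inequality is roughly $nT+n(T-1)/2$, forcing $c\ge (T-1)/2$. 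The local charging breaks precisely at an idle step sandwiched between two active steps: you need to pay $\tfrac12 p_t(S^*_t)+\tfrac{n}{2}\approx \tfrac{3n}{2}$ there, but the algorithm contributes nothing at that step and the total slack available from the two active neighbours is only about $n$. In short, the algorithm is genuinely only $3$-competitive (not $2+o(1)$), and your attempt to first prove a $2$-type bound and then ``absorb an additive $O(n)$'' cannot succeed.

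The paper's proof does not go through a factor-$2$ profit bound at all. It lets $J\subseteq\{1,\dots,T-1\}$ be the set of active steps and $t_1=\max J$, then bounds the optimum up to $t_1$ by $n(3t_1-2|J|)+\sum_{t\in J}p_t(S^*_t)$ and the algorithm by $n\max\{0,t_1-2|J|\}+\sum_{t\in J}p_t(S^*_t)$; the ratio of these is at most $3$ directly (using $\sum_{t\in J}p_t(S^*_t)\ge 2n|J|$), with no attempt to recover the idle steps' profits. The tail after $t_1$ is handled separately via the special rule at time $T$, yielding the additive $+n$ that becomes the $+\tfrac{1}{T-1}$. Your two-regime split in the last paragraph is in the right spirit for the all-idle case (and indeed closes to $3+\tfrac{1}{T-1}$ once you use the step-$T$ rule carefully), but for the mixed case you need the paper's global $|J|$-versus-$t_1$ accounting rather than a per-step charge.
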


\begin{proof}[Proof of Theorem~\ref{thm:general-hamming-upper}]
Let us define $J\subseteq\{1,\dots,T-1\}$ as the set of time steps $t<T$ where $p_t(S_t^*)\geq 2n$. 

If $J\neq \emptyset$, let $t_1$ be the largest element in $J$. We first upper bound the loss of the algorithm up to time $t_1$. We will then deal with the time period from $t+1$ up to $T$.

The global profit of an optimal solution up to time $t_1$ is at most $2n(t_1-|J|)+\sum_{t\in J}p_t(S_t^*)$. Its bonus (including the one from time $t_1$ to $t_1+1$) is at most $nt_1$. So its global value is at most $n\left(3t_1-2|J|\right)+\sum_{t\in J}p_t(S_t^*)$.

The solution computed by \texttt{BestOrNothing} gets profit at least $\sum_{t\in J}p_t(S_t^*)$. Note that it chooses the empty set always but $|J|$ times, so it gets transition bonus $n$ at least $t_1-2|J|$ times (each step in $J$ may prevent to get the bonus only between $t-1$ and $t$, and between $t$ and $t+1$). So the global value of the computed solution up to time $t_1$ is at least $n\max\{0;t_1-2|J|\}+\sum_{t\in J}p_t(S_t^*)$.

Up to time $t_1$, the ratio $r$ between the optimal value and the value of the solution computed by \texttt{BestOrNothing} verifies $$r\leq \frac{n\left(3t_1-2|J|\right)+\sum_{t\in J}p_t(S_t^*)}{n\max\{0;t_1-2|J|\}+\sum_{t\in J}p_t(S_t^*)}\leq \frac{3t_1}{\max\{0;t_1-2|J|\}+2|J|},$$ 
where we used the fact that $\sum_{t\in J}p_t(S_t^*)\geq 2n|J|$.
Since $\max\{0;t_1-2|J|\}+2|J|\geq t_1$ the ratio is at most 3 up to time $t_1$.\\

Now, let us consider the end of the process, from time $t_1+1$ (or 1 if $J$ is empty) up to time $T$. If $t_1=T-1$ then we take the best solution at time $T$ and get no extra loss, so the algorithm is 3-competitive in this case. 

Now assume $t_1<T-1$. We know that \texttt{BestOrNothing} chooses the empty set up to $T-1$. Let us first assume that $p_T(S^*_T)<n$. Then on the subperiod from $t_1+1$ to $T$ \texttt{BestOrNothing} gets value $n(T-t_1-1)$ (bonuses), while the optimum gets bonus at most $n(T-t_1-1)$ and profit at most $2n(T-t_1-1)+n$. The optimal value is then at most $n\left( 3T-3t_1-2\right)\leq 3n(T-t_1-1)+n$.

Now suppose that $p_T(S^*_T)\geq n$. On the subperiod from $t_1+1$ to $T$ \texttt{BestOrNothing} gets value $n(T-t_1-2)+p_T(S^*_T)$, while the optimum gets bonus at most $n(T-t_1-1)$ and profit at most $2n(T-t_1-2)+p_T(S^*_T)$. The worst case ratio occurs when $p_T(S^*_T)= n$. In this case, as before, the value of the computed solution is $n(T-t_1-1)$, while the optimal value is at most $n\left( 3T-3t_1-2\right)\leq 3n(T-t_1-1)+n$.

Then, in all cases we have that the optimal value is at most $3f(S_1,\dots,S_n)+n$. But $f(S_1,\dots,S_n)\geq (T-1)n$ (the computed solution has value at least $t_1n$ up to $t_1$, and then at least $n(T-t_1-1)$), and the claimed ratio follows.
\end{proof}

\subsubsection{Improvement for Submodularity and Subset Feasibility}\label{subsec:submodular}

In this section we assume that the problem have the following two properties:
\begin{itemize}
    \item {\it subset feasibility}: at any time step, every subset of a feasible solution is feasible.
    \item {\it submodularity}: for any $S,S'$, any $t$, $p_t(S\cap S')+p_t(S\cup S')\leq p_t(S)+p_t(S')$.
\end{itemize}
Note that this implies that, if a feasible set $X$ is partitioned into (disjoint) subsets $X_1,\dots,X_h$, then $X_1,\dots,X_h$ are feasible and $p_t(X)\leq \sum_i p_t(X_i)$.

We exploit this property to devise algorithms where we partition the set of objects and solve the problems on subinstances. As a first idea, let us partition the set of objects into into $3$ sets $A,B,C$ of size (roughly) $n/3$;  consider the algorithm which at every time step $t$ computes the best solutions $S^A_t,S^B_t,S^C_t$ on each subinstance on $A$, $B$ and $C$, and chooses $S_t$ as the one of maximum profit between these 3 solutions. By submodularity and subset feasibility, the algorithm gets profit at least 1/3 of the optimal profit at each time step. Dealing with bonuses, at each time step the algorithm chooses a solution included either in $A$, or in $B$, or in $C$ so, for any $t<T$, at least one set among $A, B \text{ and } C$ is not chosen neither at time $t$ nor at time $t+1$, and the algorithm gets transition bonus at least $n/3$. Hence, the algorithm is 3-competitive.

We now improve the previous algorithm. The basic idea is to remark that if for two consecutive time steps $t,t+1$ the solution $S_t$ and $S_{t+1}$ are taken in the same subset, say $A$, then the bonus is (at least) $2n/3$ instead of $n/3$. Roughly speaking, we can hope for a ratio better than 1/3 for the bonus. Then the algorithm makes a tradeoff at every time step: if the profit is very high then it will take a solution maximizing the profit, otherwise it will do (nearly) the same as previously.   
More formally, let us consider the algorithm \texttt{3-Part}. We first assume that $n$ is a multiple of 3. $x\in [0,1]$ will be defined later.

\begin{enumerate}
    \item Partition $N$ in three subsets $A,B,C$ of size $n/3$.
	%\item If $t=1$: compute a solution $S^*_1$ maximizing $p_1(S)$, and define $S_1=S^*_1$.
	\item For $t \in \{1, \ldots,T\}$: compute a solution $S^*_t$ maximizing $p_t(S)$
	\begin{itemize}
		\item Case (1): If $p_t(S^*_t) \geq xn$: define $S_t=S^*_t$
		\item Otherwise ($p_t(S^*_t) \leq xn$): compute solutions with optimal profit $S^A_t$, $S^B_t$, $S^C_t$ included in $A$, $B$ and $C$. Let $a_t$, $b_t$ and $c_t$ the respective profits.
		\begin{itemize}
			\item Case (2): if $t\geq 2$ and Case (1) did not occur at $t-1$, do: 
			
			If $S_{t-1}\subseteq A$ (resp. $S_{t-1}\subseteq B$, $S_{t-1}\subseteq C$), compute $\max\{a_t+2n/3, b_t+n/3,c_t+n/3\}$ (resp. $\max\{a_t+n/3, b_t+2n/3,c_t+n/3\}$, $\max\{a_t+n/3, b_t+n/3,c_t+2n/3\}$) and define $S_t$ as $S^A_t$, $S^B_t$ or $S^C_t$ accordingly. 
%			\begin{itemize} 
%			    \item If $S_{t-1}\subseteq A$, compute $\max\{a_t+2n/3, b_t+n/3,c_t+n/3\}$ and define $S_t$ as $S^A_t$, $S^B_t$ or $S^C_t$ accordingly. 
%				\item Similarly, if $S_{t-1}\subseteq B$, compute $\max\{a_t+n/3, b_t+2n/3,c_t+n/3\}$ and define $S_t$ as $S^A_t$, $S^B_t$ or $S^C_t$ accordingly.
%				\item Otherwise ($S_{t-1}\subseteq C$), compute $\max\{a_t+n/3, b_t+n/3,c_t+2n/3\}$ and define $S_t$ as $S^A_t$, $S^B_t$ or $S^C_t$ accordingly.
%			\end{itemize}
			\item Case (3) ($t=1$ or Case (1) occurred at $t-1$) do:
			\begin{itemize}
				\item Define $S_t$ as the solution with maximum profit among $S^A_t$, $S^B_t$, $S^C_t$.
			\end{itemize}	
		\end{itemize}
	\end{itemize}
\end{enumerate}

If $N$ is not a multiple of 3, we add one or two dummy objects that are in no feasible solutions (at any step). We prove an upper bound on the competitive ratio of this algorithm.

\begin{theorem}\label{theo:sumod2}
Consider the GE model with Hamming bonus. Under the assumption of subset feasibility and submodularity, \texttt{3-Part} is $(21/8+O(1/T+1/n))$-competitive.
\end{theorem}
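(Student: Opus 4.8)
The plan is to imitate the analysis of \texttt{BestOrNothing} (Theorem~\ref{thm:general-hamming-upper}): bound the optimum by $f(\hat{\mathcal S})\le\sum_{t=1}^T p_t(S^*_t)+(T-1)n$, show that the sequence $\mathcal S=(S_1,\dots,S_T)$ produced by \texttt{3-Part} recovers a $\frac{8}{21}$-fraction of this up to an additive error $O(n)$, and then turn that error into a multiplicative $1+O(1/T)$ factor by observing that $f(\mathcal S)=\Omega(Tn)$ --- every transition internal to a ``run'' of non-Case-(1) steps carries bonus at least $n/3$, and every Case-(1) step carries profit at least $xn$, so $f(\mathcal S)$ is linear in $T$. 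The residual $O(1/n)$ will come only from rounding $n/3$ to $\lfloor n/3\rfloor$ and from the one or two dummy objects introduced when $3\nmid n$, which shift each bonus estimate by at most an additive constant.

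First I would split $\{1,\dots,T\}$ into the set $J$ of steps where Case~(1) fires and the maximal blocks of consecutive non-Case-(1) steps (the \emph{runs}); in a run the first step is Case~(3) and the rest are Case~(2). For $t\in J$ the algorithm collects profit $p_t(S^*_t)\ge xn$ while the optimum collects at most $p_t(S^*_t)+n$ on that step, a ratio of $1+1/x$. For a run $R=[\ell,r]$ I would invoke subset feasibility and submodularity as in the ``first idea'' preceding the algorithm: $p_t(S^*_t)\le a_t+b_t+c_t\le 3\max\{a_t,b_t,c_t\}$, so the best of the three part-solutions already has profit $\ge\frac13 p_t(S^*_t)$; moreover two solutions contained in the same part agree on the other $2n/3$ coordinates (bonus $\ge 2n/3$), while two contained in different parts still agree on the third part (bonus $\ge n/3$). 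The key refinement is the Case-(2) comparison: a \emph{stay} step gives bonus $\ge 2n/3$, a \emph{switch} step gives bonus only $\ge n/3$ but --- because its branch won the comparison --- its part-profit exceeds the previous one by at least $n/3$, which submodularity converts into extra profit of order $n$. Summing profits and bonuses across the run, the number of switches cancels, and \texttt{3-Part} obtains on $R$ a value $\ge\frac13\sum_{t=\ell}^r p_t(S^*_t)+\Theta\big((r-\ell)\,n\big)$, against an optimum of at most $\sum_{t=\ell}^r p_t(S^*_t)+(r-\ell+1)\,n$; since $p_t(S^*_t)<xn$ on a run, this ratio is an increasing function of $x$ that stays below $21/8$ on the relevant range.

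Next I would account for the transitions that are \emph{not} internal to a run --- those entering or leaving a Case-(1) step, the absent bonus at $t=1$, and the special last step --- plus the short runs (of length $1$ or $2$), which are the only runs whose self-contained estimate can exceed $21/8$. Each Case-(1) step destroys at most two boundary bonuses, and the $\ge xn$ profit it collects, set against the optimum's $p_t(S^*_t)+n$, leaves a surplus that I would charge against these losses and against the adjacent short runs. Making this charging tight is what pins $x$: one needs the single value for which the Case-(1) ratio $1+1/x$, the long-run ratio, and the short-run-plus-neighbouring-Case-(1) ratio all drop to $21/8$ simultaneously. Summing the per-run and per-$J$ bounds together with the $O(n)$ of boundary corrections yields $f(\hat{\mathcal S})\le\frac{21}{8}\,f(\mathcal S)+O(n)$, and combining with $f(\mathcal S)=\Omega(Tn)$ gives the claimed $\frac{21}{8}+O(1/T+1/n)$.

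The main obstacle I anticipate is exactly this last bit of bookkeeping: tracking which transition bonuses are guaranteed and at what value ($n/3$ versus $2n/3$), extracting the ``extra profit'' at switch steps in a form that survives summation, and then verifying that a \emph{single} admissible $x\in[0,1]$ forces all of the several resulting inequalities to hold with the common constant $21/8$ --- in particular that no short run adjacent to Case-(1) steps, and neither the first nor the last step, breaks the bound. Pushing the $\lfloor n/3\rfloor$ rounding and the dummy objects through every inequality to certify the $O(1/n)$ term is a secondary, mechanical nuisance.
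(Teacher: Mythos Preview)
Your plan is workable and would eventually reach the bound, but it is considerably more involved than the paper's argument, and one of your balancing claims is off.

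The paper avoids the run decomposition entirely by analysing Case~(2) \emph{per step} with a single averaging inequality: if, say, $S_{t-1}\subseteq A$, then
\[
p_t(S_t)+b(S_{t-1},S_t)=\max\Bigl\{a_t+\tfrac{2n}{3},\;b_t+\tfrac{n}{3},\;c_t+\tfrac{n}{3}\Bigr\}
\ge\frac{(a_t+b_t+c_t)+\tfrac{4n}{3}}{3}
\ge\frac{p_t(S^*_t)}{3}+\frac{4n}{9},
\]
the last step by submodularity and subset feasibility. Combined with $p_t(S^*_t)<xn$ this already gives a per-step ratio $r_2^{-1}=\tfrac{9(1+x)}{3x+4}$. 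Your stay/switch bookkeeping, once carried out, would recover exactly this inequality case by case, so the ``cancellation'' you anticipate happens inside the max and needs no summation across the run.

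The charging from Case~(1) to short runs likewise collapses: Case~(3) at $t\ge2$ occurs \emph{only} immediately after a Case~(1) step, so one simply treats the pair $(t-1,t)$ together, obtaining ratio $r_3^{-1}=\tfrac{3(1+x)}{2x}$ (using $p_{t-1}(S^*_{t-1})\ge xn$, $p_t(S^*_t)\le xn$). A Case~(1) step not followed by Case~(3) has standalone ratio $r_1^{-1}=\tfrac{1+x}{x}$, which is strictly smaller. Equating $r_2$ and $r_3$ forces $x=\tfrac{4}{3}$, whence $r_2^{-1}=r_3^{-1}=\tfrac{21}{8}$ and $r_1^{-1}=\tfrac{7}{4}<\tfrac{21}{8}$. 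Two corrections to your sketch follow: the optimal $x$ is \emph{not} in $[0,1]$ (the range stated in the algorithm description is a slip in the paper), and the Case~(1) ratio does \emph{not} itself equal $21/8$ --- it is strictly better, and its surplus is exactly what pays for the subsequent Case~(3). Your proposed three-way balance ``Case~(1) $=$ long-run $=$ short-run-plus-Case~(1) $=21/8$'' therefore cannot hold; enforcing it would send you to the wrong $x$.

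The additive $O(n)$ from step $t=1$ and from the dummy objects when $3\nmid n$ is then absorbed via $f^*\ge n(T-1)$, essentially as you outline.
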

\begin{proof}
We mainly show that in each case (1), (2) or (3) the computed solution achieves the claimed ratio.
\begin{itemize}
    \item  Let us first consider a time step $t\geq 2$ where Case (2) occurs. It means that Case (2) or (3) occurred at the previous step, so $S_{t-1}$ is included in $A$, $B$ or $C$. Suppose w.l.o.g that algorithm took $S_{t-1}\subseteq A$. Then $S^A_t$ gives a bonus at least $2n/3$ (between $t-1$ and $t$), and $S^B_t$ and  $S^C_t$ gives a bonus at least $n/3$. By computing $\max\{a_t+2n/3, b_t+n/3,c_t+n/3\}$, we derive:
\begin{equation}\nonumber
p_t(S_t)+b(S_t,S_{t-1})\geq \frac{a_t+2n/3+ b_t+n/3+c_t+n/3}{3}\geq \frac{p_t(S^*_t)}{3}+\frac{4n}{9},
\end{equation}
where $S^*_t$ is a solution maximizing the profit at time $t$, using the fact that $p_t(S^*_t)\leq a_t+b_t+c_t$ by subset feasibility and submodularity.
Since in Case (2) $p_t(S^*_t)\leq xn$, we derive:
\begin{equation}\label{eqtcas2}
p_t(S_t)+b(S_t,S_{t-1})\geq  r_2 \left(p_t(S^*_t)+n\right)
\end{equation}
with $r_2=\frac{3x+4}{9(1+x)}$.
\item Now, consider a time step $t\geq 2$ where Case (3) occurs. Then necessarily Case (1) occurs at step $t-1$. So $S_{t-1}=S^*_{t-1}$. Also, $S_t$ has profit at least $p_{t}(S^*_{t})/3$. So 
\begin{equation}\nonumber
    \sum_{\ell=t-1}^t p_\ell(S_\ell)+b(S_\ell,S_{\ell-1})\geq  p_{t-1}(S^*_{t-1}) + \frac{p_t(S^*_t)}{3}
\end{equation}
So $\sum_{\ell=t-1}^t p_\ell(S_\ell)+b(S_\ell,S_{\ell-1})\geq  r_3\left (p_{t-1}(S^*_{t-1})+n+ p_t(S^*_t)+n\right)$
with $$r_3=\frac{p_{t-1}(S^*_{t-1}) + \frac{p_t(S^*_t)}{3}}{p_{t-1}(S^*_{t-1})+2n+ p_t(S^*_t)}.$$ Since $p_{t-1}(S^*_{t-1})\geq xn$, we get:
\begin{equation}\nonumber
    r_3 \geq \frac{xn + \frac{p_t(S^*_t)}{3}}{(2+x)n+ p_t(S^*_t)}.
\end{equation}
Since $p_{t}(S^*_{t})\leq xn$, provided that we choose $x\geq 1$ such that $x/(2+x)\geq 1/3$, we get:
\begin{equation}\nonumber
    r_3 \geq \frac{xn + xn/3}{(2+x)n+ xn}=\frac{2x}{3(1+x)}.
\end{equation}

\item Finally, suppose that Case (1) occurs at some step $t\geq 2$. Then $S_t=S^*_t$ and $p(S^*_t)\geq xn$, so 
\begin{equation}\nonumber
    p_t(S_t)+b(S_t,S_{t-1})\geq  p_t(S_t) = p_t(S^*_t)\geq r_1(p_t(S^*_t)+n).
\end{equation}
with $r_1=\frac{x}{1+x}$.
\end{itemize}
By setting $x=\frac{4}{3}$, we get $r_1\geq r_2=r_3=8/21$.

It remains to look at step 1. If $p_1(S^*_1)\geq xn$ (Case (1)), then $S_1=S^*_1$, so there is no profit loss. Otherwise, $p_1(S^*_1)\leq xn$, Case (3) occurs, the loss it at most $2p_1(S^*_1)/3\leq 2xn/3\leq n$. Since the optimal value is at least $n(T-1)$, the loss it a fraction at most $1/(T-1)$ of the optimal value.

If $n$ is not a multiple of 3, adding one or two dummy objects add $T-1$ or $2(T-1)$ to solution values, inducing a loss which is a fraction at most $O(1/n)$ of the optimal value.
\end{proof}

\subsubsection{Lower Bounds}\label{sub:generallower}

We complement the algorithmic results with a lower bound for two time steps and an asymptotical one. Interestingly, these bounds are also valid for the latter restricted setting with subset feasibility and submodularity.

\begin{theorem}\label{thm:general-hamming-lower}
Consider the GE model with Hamming bonus. For any $\epsilon>0$, there is no $(1+\sqrt{2}-\epsilon)$-competitive algorithm.%even with only 2 time steps and under the assumption of subset feasibility and submodularity.
\end{theorem}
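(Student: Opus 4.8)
The plan is to construct an adversarial instance with $T=2$ time steps in which the algorithm must commit to a solution at time $1$ before learning how the profits evolve, and then show that for any online choice the adversary can force a ratio approaching $1+\sqrt2$. First I would fix a ground set $N$ and a parameter scheme governed by $\sqrt2$: the natural guess is that at time $1$ there are (essentially) two relevant feasible solutions — say $\emptyset$ and some set $X$ of size $n$ — with a profit at time $1$ calibrated so that the algorithm is indifferent. Concretely, I expect the construction to give profit $0$ to everything at time $1$ (so the only thing the algorithm can do is pick a set to position itself for the bonus), and then have the adversary react at time $2$: if the algorithm picked $\emptyset$, make a large solution highly profitable at time $2$ so that the optimum, which picked that large set already at time $1$, collects both a big bonus and the profit, while the algorithm gets only the bonus $|\overline{S_1}\cap\overline{S_2}|$ or the profit, not both; if the algorithm picked the large set $X$, make the profitable solution at time $2$ disjoint from $X$ (using general evolution to change $\mathcal F_2$), so that the algorithm's time-$1$ commitment is worthless for the bonus and again the optimum wins.

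The key steps in order: (1) define the instance with the right sizes — I would introduce a size parameter and let $n\to\infty$, with the profit of the good time-$2$ solution scaled by something like $\sqrt2$ relative to $n$; (2) enumerate the (two or three) cases for the algorithm's time-$1$ choice and, in each, pick the adversarial time-$2$ instance that is worst for the algorithm; (3) in each case, exhibit an explicit feasible solution sequence for OPT and lower-bound its value $f(\hat S_1,\hat S_2)=p_1(\hat S_1)+p_2(\hat S_2)+b(\hat S_1,\hat S_2)$; (4) upper-bound the algorithm's value $f(S_1,S_2)$ over its best possible time-$2$ response; (5) take the ratio, optimize the free scaling parameter, and check it tends to $1+\sqrt2$ as $n\to\infty$, absorbing lower-order terms into $\epsilon$; (6) finally verify that the instance satisfies subset feasibility and submodularity (indeed additivity), as claimed in the text preceding the theorem — this just means making $\mathcal F_t$ downward-closed and the $p_t$ linear, which the construction should allow by taking the "large" solutions to be full subsets closed under inclusion and assigning per-object profits.

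The main obstacle I expect is getting the parameters to align so that the algorithm's two escape routes (collect bonus vs.\ collect profit) are simultaneously balanced against the two adversarial responses, yielding the same ratio $1+\sqrt2$ in every case rather than a weaker bound in one of them; this is where the quadratic that produces $\sqrt2$ enters. Writing $x n$ for the profit of the distinguished time-$2$ solution and tracking that the algorithm can get at best $\max\{xn,\ n\}$-type quantities on the second step while OPT gets roughly $xn+n$, the worst case over the algorithm's first move forces $x$ to solve something like $x+1 = (1+x)/x$ or a similar relation whose positive root is $\sqrt2$; I would set up that equation cleanly and then just plug back. A secondary, more bookkeeping-type concern is handling the fact that there is no bonus before time $1$ and only one transition (between steps $1$ and $2$), so constant additive offsets of size $O(1)$ or $O(n)$ relative to a total of order $n$ must be shown to vanish — with only two time steps one must be slightly careful that these do not swamp the bound, which is precisely why the construction needs $n\to\infty$ rather than a fixed small instance.
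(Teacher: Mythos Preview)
Your plan has a concrete gap: setting all time-$1$ profits to $0$ cannot yield a lower bound better than $2$. If $p_1\equiv 0$ and the algorithm simply takes $S_1=\emptyset$, then for any adversarial instance at time $2$ the algorithm can choose $S_2$ to maximize $p_2(S_2)+(n-|S_2|)$, which is at least $\max\{n,\,p_2(\hat S_2)+n-|\hat S_2|\}$ for the optimal $\hat S_2$; meanwhile OPT gets at most $n+p_2(\hat S_2)$. A short case split on whether $p_2(\hat S_2)\le |\hat S_2|$ shows the ratio never exceeds $2$. Your heuristic ``OPT gets roughly $xn+n$ while the algorithm gets $\max\{xn,n\}$'' confirms this: $(1+x)/\max\{1,x\}\le 2$ for all $x>0$, and the equation you sketch does not have $\sqrt2$ as a root. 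So the ``if the algorithm picked $\emptyset$'' branch of your case analysis collapses, and the bound you can extract is only $2$, not $1+\sqrt2$.

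The missing idea, which the paper uses, is to put a \emph{positive} profit at time $1$ so that picking $\emptyset$ (or any small set) is itself costly. Concretely, the paper takes a knapsack instance on $n$ objects with capacity $n$ at time $1$ and per-object profit $\alpha=\sqrt2-1$; whatever set $S_1$ the algorithm chooses, the adversary at time $2$ sets capacity $n-|S_1|$, gives objects in $\overline{S_1}$ weight and profit $1$, and gives objects in $S_1$ weight exceeding the capacity. Then the algorithm's total value is exactly $\alpha|S_1|+(n-|S_1|)$ regardless of $S_2$, while taking $\overline{S_1}$ at both steps yields $(1+\alpha)(n-|S_1|)+n$; the choice $\alpha=\sqrt2-1$ makes the ratio identically $1+\sqrt2$ for \emph{every} $|S_1|$. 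Two further points where the paper's argument is cleaner than your outline: there is no case split on the algorithm's first move (the construction is uniform in $S_1$), and there is no need to send $n\to\infty$ --- the bound is exact already for $n=2$. The instance is linear and downward-closed, so it also witnesses the bound under subset feasibility and submodularity, as you correctly anticipated.
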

\begin{proof}
We consider a knapsack problem with $n$ objects ($n=2$ suffices to show the result, but the proof is valid for any number $n$ of objects), and $T=2$ time steps. At time 1, all objects have weight 1 and profit $\alpha=\sqrt{2}-1$; the capacity of the bag is $n$. 

Let $S_1$ be the set of objects chosen at step 1 by the algorithm (possibly $S_1=\emptyset$). At $t=2$ the algorithm receives the instance $I_2(S_1)$ where:
\begin{itemize}
	\item the capacity is $c_2=n-|S_1|$.
	\item each object not in $S_1$ receives weight and profit 1.
	\item each object in $S_1$ has a weight greater than $c_2$.
\end{itemize}
Then at step 2 the algorithm receives value 1 for each object not in $|S_1|$ (either by transition bonus from step 1, or by taking it at step 2). The value of its solution is $\alpha |S_1|+n-|S_1|$. Now, the solution consisting of taking $\overline{S_1}$ at both time steps has value $\alpha(n-|S_1|)+n-|S_1|+n=n(2+\alpha)-|S_1|(1+\alpha)$. The chosen $\alpha$ is such that $2+\alpha=\frac{1+\alpha}{1-\alpha}$, so the solution $(\overline{S},\overline{S})$ has value $\frac{1+\alpha}{1-\alpha}\left(n-|S|(1-\alpha)\right)$. The ratio is $\frac{1+\alpha}{1-\alpha}=2+\alpha=1+\sqrt{2}$.
\end{proof}

\begin{theorem}~\label{th:lwasymp}
	Consider the GE model with Hamming bonus. For every $\epsilon>0$, there is a $T_\epsilon$ such that, for each number of time steps $T\geq T_\epsilon$, there is no $(\alpha-\epsilon)$-competitive algorithm where $\alpha= \frac{6\cdot\sqrt[3]{9 + \sqrt{87}}}{\sqrt[3]{6\cdot(9 + \sqrt{87})^2}-\sqrt[3]{36}}\approx1.696$.%, even under the assumption of subset feasibility and submodularity.
\end{theorem}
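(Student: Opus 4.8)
The plan is to merge the phase-based amortization of Theorem~\ref{th:lbmod1} with the ``reset'' idea underlying the two-step lower bound of Theorem~\ref{thm:general-hamming-lower}, which is available precisely because the set of feasible solutions may change: at any step the adversary can make the objects that the algorithm currently holds infeasible (e.g.\ by blowing up their weights in a knapsack-type instance), so that the algorithm forfeits \emph{both} their profit and any bonus it could have carried from them, whereas a solution that never selected those objects is untouched. As in Theorem~\ref{thm:general-hamming-lower}, the target constant should come out of optimizing a uniform profit parameter $\beta$, but now over a \emph{repeated} gadget rather than a single step, which is why a cubic (rather than a quadratic) governs it.

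Concretely, I would split the horizon into a short prefix plus a long sequence of \emph{phases}, using the prefix only to neutralize the ``no bonus at $t=1$'' artefact exactly as in Theorem~\ref{th:lbmod1}, and I would carry a boundary invariant pinning down what the algorithm and a fixed reference solution each hold at the start of a phase. Inside a phase that starts at time $t$, the adversary plays the gadget of Theorem~\ref{thm:general-hamming-lower}: all objects get a small uniform profit $\beta$ with unit weight and a matching capacity, so that a value-maximizing move grabs a large set $S_t$; then, at the following steps, the objects of $S_t$ are made infeasible and the residual profits are tuned in the spirit of the ``drain'' of Theorem~\ref{th:lbmod1}, forcing the algorithm through several cheap steps before it can recover and re-establish the invariant. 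A reference solution that at the start of the phase commits to a large set disjoint from $S_t$ keeps it feasible and profitable throughout, hence stays a fixed factor ahead per step; a phase that ``ends by default'' (the algorithm never moves) is truncated and matched against a correspondingly shortened reference solution, changing both values by $O(1)$, once more as in Theorem~\ref{th:lbmod1}. I would also check that this knapsack-type gadget satisfies subset feasibility and submodularity, so that the bound applies to the restricted setting as announced.

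Summing the per-phase estimates and letting $T\to\infty$ kills the additive $O(1)$ terms, and the competitive ratio is then at least the value of an optimization over $\beta$ and over how frequently ``reset'' steps alternate with the ``recovery'' steps that repair the invariant. I expect this optimization to collapse to the single balance equation $\alpha^{3}=\alpha^{2}+2$ (equivalently $\alpha^{2}(\alpha-1)=2$); its unique real root is exactly the constant in the statement, since $\frac{6\sqrt[3]{9+\sqrt{87}}}{\sqrt[3]{6(9+\sqrt{87})^{2}}-\sqrt[3]{36}}$ equals $\frac{v^{2}u}{u^{2}-v}$ with $u=\sqrt[3]{9+\sqrt{87}}$ and $v=\sqrt[3]{6}$, and a short computation shows that this number satisfies $\alpha^{3}-\alpha^{2}-2=0$.

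The main obstacle will be making the phases genuinely \emph{compose}: I have to show that, no matter how the online algorithm behaves inside a phase, the boundary invariant is restored up to an additive constant while the reference solution stays an $\alpha$-factor ahead, and --- mirroring the two-step proof --- I must choose the parameters so that \emph{every} algorithm response inside a phase (not just the two extreme strategies) yields the same ratio, leaving no cheaper escape. This accounting is exactly what pushes the asymptotic constant below the $1+\sqrt{2}$ of the two-step bound down to $\approx1.696$: back-to-back ``reset'' gadgets do not chain for free, and it is the interleaved recovery steps whose frequency the equation $\alpha^{3}=\alpha^{2}+2$ optimizes. I would not expect to be able to show this bound is tight.
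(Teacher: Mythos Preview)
Your plan correctly isolates the two structural ingredients (phase amortization plus a feasibility ``reset'' that punishes the algorithm for committing) and even lands on the right governing cubic $\alpha^{3}=\alpha^{2}+2$. But the route you sketch is far more elaborate than what the paper does, and that elaboration is where the risk sits. The paper's construction uses a \emph{single} item: $N=\{1\}$. A phase begins with $\{1\}$ infeasible; thereafter the item is feasible with profit $\beta$ in the first step of the phase and $\gamma$ in subsequent steps, and the moment the algorithm takes it, it becomes infeasible again and the phase ends. The whole analysis reduces to bounding, for a phase of length $\ell$, the algorithm's value $\max\{\beta,\,(\ell-2)+\gamma\}$ against the optimum's $\beta+(\ell-2)(1+\gamma)$ (or $2$ when $\ell=2$), then equalizing across $\ell$ and across the two branches. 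Two scalar parameters, one minimization, done; subset feasibility and submodularity are trivial because there is only one object.

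By contrast, your multi-object knapsack gadget forces you to track the \emph{size} of $S_t$ as a free variable in every phase, to argue a genuine boundary invariant that re-synchronizes the reference solution with a set disjoint from whatever the algorithm has touched, and to show that \emph{every} choice of $|S_t|$ yields the same ratio---none of which you have actually carried out, and each of which can leak constant factors. Your intuition that ``back-to-back reset gadgets do not chain for free'' is exactly right, but the paper handles this by the one-item trick: the Hamming bonus on a single item is already a clean $0/1$ switch-penalty, so there is no chaining problem to manage. I would drop the knapsack layer entirely and redo the phase analysis with $|N|=1$; the cubic you identified will fall out directly from equalizing $2/\beta$ with $1+\gamma$ and then eliminating $\gamma$ via the per-phase balance.
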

\begin{proof}
	Consider some $\epsilon>0$ and some online algorithm \texttt{A}. The ground set only consists of the single item $1$, that is, $N=\{1\}$.
	
	At time $1$, it is not feasible to pick the item, that is, $\mathcal{F}_1=\{\emptyset\}$. We partition the remaining time horizon $\{2,3,\dots,T\}$ (with $T$ yet to be specified) into \emph{phases}. Hence, the first phase starts in time step $2$. In any phase, as long as \texttt{A} has not included item $1$ in its solution until time $t<T$, both including and not including it is feasible at $t+1$, that is, $\mathcal{F}_{t+1}=\{\emptyset,\{1\}\}$. Once \texttt{A} includes the item in its solution at time $t<T$ (meaning $S_{t}=\{1\}$), including it becomes unfeasible at the next time, that is, $\mathcal{F}_{t+1}=\{\emptyset\}$. The current phase also ends at this time. In this case, we say that the phase ends \emph{regularly}. At $t=T$, the current phase ends \emph{by default} in any case. If a phase however ends regularly at time $t+1<T$, a new phase starts at time $t+2$.
	
 There is no profit associated with the empty set, that is, $p_t(\emptyset)=0$; the profit $p_t(\{1\})$ is $\beta$ whenever $t$ is the first time step of a phase, and it is $\gamma$ in all other cases (note that, however, it may be unfeasible to include item $1$ in the solution).  The remaining part of the proof is concerned with finding $\beta,\gamma$ so as to maximize the competitive ratio. %For reasons that will become evident later, we assume $\beta<2$ and $\beta+\gamma>2$. 
	
	For the analysis, denote by $\mathcal{S}^*=(S^*_1,S^*_2,\dots,S^*_T)$ the optimal solution, and denote by $\mathcal{S}=(S_1,S_2,\dots,S_T)$ the solution that \texttt{A} finds. We consider phases separately. First consider a phase of length $\ell$ starting at time $t_0$ ending regularly (at time $t_0+\ell-1$). Note that $\ell\geq2$ and that the initial situation is independent of $t_0$ and $\ell$ in that $1\notin S_{t-1}$. For each time $t$ that is part of the phase, we count $b(S_{t-1},S_t)+p_t(S_t)$ and $b(S^*_{t-1},S^*_t)+p_t(S^*_t)$ towards the values of the optimum and algorithm, respectively. If $\ell=2$, the resulting values of the algorithm and optimum are $\max\{\beta,2\}\geq2$ and $\beta$, respectively. If $\ell>2$, the value are $\max\{\beta+(\ell-2)(1+\gamma),\ell\}\geq\beta+(\ell-2)(1+\gamma)$ and $(\ell-2)+\gamma$, respectively. Hence, in phases of length at most $2$, the optimum does not pick the item; in longer phases, it picks the item at all times when it can.
	
	To express the lower bound that we can show, first note that assuming that each phase ends regularly is only with an additive constant loss in both the algorithm's and the optimum's value, so we may make this assumption for the asymptotical competitive ratio considered here. Since the algorithm chooses the phase lengths, the lower bound $\alpha$ that we can show here is equal to the largest lower bound on the ratio between the optimum's and the algorithm's value within any phase, which is lower bounded by \begin{equation}\label{eq:LB-1}
	    \min\left\{\frac{2}{\beta},\inf_{\ell\in \mathbb{N};\ell\geq3}\frac{\beta+(\ell-2)(1+\gamma)}{(\ell-2)+\gamma}\right\}
	\end{equation}
	according to the above considerations.
	
	Note that the infimum in~\eqref{eq:LB-1} is minimized when its argument is identical across all $\gamma$. This is the case when $$\frac{1+\beta+\gamma}{1+\gamma}=1+\gamma\Leftrightarrow \gamma=\frac12\cdot(\sqrt{4\beta+1}-1).$$ Furthermore,~\eqref{eq:LB-1} is minimized when both its arguments are identical, meaning $$1+\frac12\cdot(\sqrt{4\beta+1}-1)=\frac2\beta\Leftrightarrow\beta=\frac{\sqrt[3]{3\cdot(9 + \sqrt{87})^2}-\sqrt[3]{36}}{3\cdot\sqrt[3]{9 + \sqrt{87}}}$$ and therefore $$\alpha=\frac2\beta=\frac{6\cdot\sqrt[3]{9 + \sqrt{87}}}{\sqrt[3]{6\cdot(9 + \sqrt{87})^2}-\sqrt[3]{36}}\approx1.696.$$ This shows the claim.
\end{proof}

\subsection{\hard{}-Bonus Model}
\label{subsec:general-intersection}

We now look at the general-evolution model with intersection bonus. This model is different from the ones considered before: We first give a simple lower bound showing that there is no constant-competitive algorithm.

\begin{theorem}\label{thm:general-intersection-lower}
   In the GE model with intersection bonus, there is no $c$-competitive algorithm for any constant $c$.
\end{theorem}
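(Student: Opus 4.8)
The plan is to exhibit a family of instances, parametrized by a large integer $m$, on which every deterministic online algorithm achieves value bounded by a constant while the optimum has value at least $\Omega(m)$; since the bonus $B$ is fixed to $1$, this forces the competitive ratio above any constant $c$. The driving idea is the one announced in the overview: in the general-evolution model we are allowed to forbid, at step $t+1$, exactly the items the algorithm picked at step $t$, thereby denying it any intersection bonus, no matter what it did. So the algorithm can never ``carry over'' a solution, whereas an optimum that foresees the whole sequence can.

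Concretely, I would take a ground set $N$ of $n$ large (say $n=2$ already suffices, mirroring the other lower bounds, but the argument works for any $n$), and $T$ time steps with $T$ large. The feasible families $\mathcal{F}_t$ and profits $p_t$ are chosen adaptively: at each step the profit of picking any single nonempty set is some small value $\varepsilon$ (small enough that the total profit summed over all $T$ steps is at most, say, $1$), and the only way to gain value is through the intersection bonus of a set kept between consecutive steps. After the algorithm commits to $S_t$ at step $t$, we set $\mathcal{F}_{t+1}$ so that every item of $S_t$ is excluded from all feasible sets at step $t+1$; the algorithm then gets zero bonus between $t$ and $t+1$, and only the tiny profit $\varepsilon$ at each step, for a total value at most $T\varepsilon + $ (one step of bonus at most $n$ at the last step) $\le O(1)$. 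Meanwhile, there is a fixed nonempty set $X$ (e.g.\ a single item) that the algorithm, by a counting/pigeonhole argument over the $T$ steps, must avoid forever at some specific step; keeping $X$ at every step is feasible in the optimum's solution (the optimum is never the one that triggers the exclusion), giving bonus $|X|\ge 1$ for each of the $T-1$ transitions, hence value $\ge T-1$. Letting $T\to\infty$ drives the ratio to infinity.

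The one subtlety — and the main thing to get right — is making the adversarial construction internally consistent: we must guarantee the existence of a set $X$ that remains feasible at every step for the optimum while simultaneously being excludable ``against'' whatever the algorithm does. With $N=\{1,2\}$ and $\mathcal{F}=\{\emptyset,\{1\},\{2\}\}$ this is immediate: whichever of $\{1\},\{2\}$ the algorithm picks at step $t$, we forbid it at step $t+1$, and the optimum simply commits to the single item the algorithm does \emph{not} pick at step $1$ and — with a little care in how the exclusions are scheduled — can keep it throughout; one clean way is to have the optimum's reference item be whichever of the two items the algorithm leaves untouched during step $1$, and then to note the adversary only ever forbids the algorithm's most recent choice, so the reference item is forbidden only when the algorithm just picked it, which by construction the optimum never did. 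I would write the instance exactly in this minimal $N=\{1,2\}$ form to keep the bookkeeping trivial, remark that the same idea scales to any $n$, and conclude that no constant competitive ratio is possible, which is the statement of Theorem~\ref{thm:general-intersection-lower}.
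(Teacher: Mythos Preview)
Your core idea---forbid at step $t+1$ whatever item the algorithm chose at step $t$, so that the algorithm never earns intersection bonus---is exactly what the paper does. But the paper carries it out far more simply: it takes $T=2$, $N=\{1,2\}$, $\mathcal{F}_1=\{\emptyset,\{1\},\{2\}\}$, and \emph{zero} profits. Whatever the algorithm does at step~$1$, the adversary makes the other singleton the only nonempty feasible set at step~$2$; the algorithm's value is then $0$ while the optimum (picking that other singleton at both steps) has value $1$. The ratio is already unbounded at $T=2$; no large-$T$ limit, no~$\varepsilon$, no pigeonhole is needed.

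Your more elaborate plan (small profit $\varepsilon$, large $T$, $n=2$) has a genuine gap. Once the algorithm's last choice is forbidden, with only two items the algorithm is forced to alternate $\{1\},\{2\},\{1\},\dots$; consequently the feasible families become $\mathcal{F}_2=\{\emptyset,\{2\}\}$, $\mathcal{F}_3=\{\emptyset,\{1\}\}$, $\mathcal{F}_4=\{\emptyset,\{2\}\},\dots$, and \emph{no} single item is feasible at all steps. Your sentence ``the reference item is forbidden only when the algorithm just picked it, which by construction the optimum never did'' confuses what the optimum does with what is feasible: feasibility is fixed by the adversary's construction and binds the optimum as well. With this alternating feasibility the optimum earns only a bounded bonus (essentially the bonus between steps $1$ and $2$), so with $\varepsilon>0$ your ratio tends to $1$, not $\infty$. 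The fix is either to drop $\varepsilon$ and take $T=2$ (which is the paper's proof), or to take $n>T$ so that some item is never touched by the algorithm and hence never forbidden; you gesture at the latter but do not actually do it.
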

\begin{proof}
We consider an instance with no profit. Let $T=2$, $N=\{1,2\}$, and $\mathcal{F}_1=\{\emptyset,\{1\},\{2\}\}$, that is, there are two items, and at time $0$ it is only forbidden to take both of them. Assume w.l.o.g.\ that the algorithm does not pick item $2$ at time $1$. Then picking item $1$ becomes infeasible at time $2$ while picking item $1$ remains feasible. Then the algorithm achieves $0$ profit and bonus while the algorithm can achieve a bonus of $1$.
\end{proof}

Note that in this model, by adding dummy time steps giving no bonus and no profit, the previous lower bound extends to any number of time steps.
This lower bound motivates considering the 1-lookahead model: at time $t$, besides $I_t$, the algorithm knows the instance $I_{t+1}$. It shall decide the feasible solution chosen at time $t$. We consider an algorithm based on the following idea: at some time step $t$, the algorithm computes an optimal sequence of 2 solutions  $(S_{t,1}^*,S_{t,2}^*)$ of value $z^*_t$ for the subproblem defined on time steps $t$ and $t+1$. Suppose it fixes $S_t=S_{t,1}^*$. Then, at time $t+1$, it computes $(S_{t+1,1}^*,S_{t+1,2}^*)$ of value $z^*_{t+1}$. Depending on the values $z^*_t$ and $z^*_{t+1}$, it will either choose to  set $S_{t+1}=S^*_{t,2}$, confirming its choice at $t$ (getting in this case value $z^*_t$ for sure between time $t$ and $t+1$), or change its mind and set $S_{t+1}=S^*_{t+1,1}$ (possibly no value got yet, but a value $z^*_{t+1}$ if it confirms this choice at $t+2$). When a choice is confirmed ($S_t=S_{t,1}^*$ and $S_{t+1}=S_{t,2}^*$), then the algorithm starts a new sequence (fix $S_{t+2}=S^*_{t+2,1}$,\dots).

More formally, let $(S_{t,1}^*,S_{t,2}^*)$ be an optimal solution of the subproblem defined on time steps $t$ and $t+1$, and denote $z^*_t$ its value (including profits and bonus between time $t$ and $t+1$). To avoid unnecessary subcases, we consider at time $T$ $(S_{T,1}^*,S_{T,2}^*)$ where $S_{T,2}=\emptyset$ and $z^*_T$ is the profit of the optimal solution for the single time step $T$, $S_{T,1}^*$. Then consider the algorithm \texttt{Balance} which:

\begin{enumerate}
    \item At time $t=1$ compute $(S_{1,1}^*,S_{1,2}^*)$ and fix $S_1=S_{1,1}^*$.
    \item For $t=2$ to $T$: compute $(S_{t,1}^*,S_{t,2}^*)$.
    \begin{itemize}
        \item Case (1): If at $t-1$ the algorithm chose $S_{t-1}$ equal to $S_{t-2,2}^*$ (i.e., Case (3) occurred), then fix $S_t=S_{t,1}^*$.
        \item Case (2): Otherwise, if $z^*_t>2 z^*_{t-1}$, then fix $S_t=S^*_{t,1}$. 
        \item Case (3): Otherwise fix $S_t=S_{t-1,2}^*$.
    \end{itemize}
    
\end{enumerate}

\begin{theorem}\label{thm:general-intersection-upper}
In the GE model with intersection bonus and 1-lookahead, \texttt{Balance} is a 4-competitive algorithm.
\end{theorem}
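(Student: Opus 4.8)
The plan is to analyze the algorithm \texttt{Balance} by decomposing the time horizon into maximal \emph{blocks} according to the algorithm's decisions. A natural way to cut the horizon is at those time steps $t$ where Case (3) occurs, i.e.\ the algorithm sets $S_t = S_{t-1,2}^*$ and thereby \emph{confirms} a previously started length-2 sequence. Between two consecutive confirmations, the algorithm repeatedly starts fresh sequences at times where Case (2) triggers (because $z^*_t > 2 z^*_{t-1}$), each time abandoning whatever it had tentatively begun. Within such a block, the values $z^*_{t}$ at the successive ``start'' steps form a strictly more-than-geometrically-increasing chain ($z^*_{t} > 2 z^*_{t-1}$), and only the last one is actually cashed in (via the confirming Case (3) step). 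So the first key step is to set up this block structure and to observe that, in a block ending with a confirmation at step $t^\star$, the algorithm collects value at least $z^*_{t^\star-1}$ (the value of the confirmed pair, fully realized between $t^\star-1$ and $t^\star$), while the geometric-growth condition bounds the \emph{sum} of all the $z^*$ values attached to the abandoned starts in that block by roughly $2 z^*_{t^\star-1}$.

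The second key step is to upper-bound the optimum in terms of the same $z^*_t$ quantities. The crucial monotone-type fact is that for every $t$, $z^*_t \ge$ (profit of $S_t^*$ at time $t$) and, more importantly for the bonus, that the contribution of the true optimal solution $(\hat S_1,\dots,\hat S_T)$ restricted to any window $[t,t+1]$ — its profit at $t$ and $t+1$ plus its $t$-to-$t+1$ bonus — is at most $z^*_t$ by definition of $(S^*_{t,1},S^*_{t,2})$ as the \emph{optimal} length-2 subsequence on that window. I would charge the optimum's total value $f(\hat{\mathcal{S}}) = \sum_t p_t(\hat S_t) + \sum_{t<T} b(\hat S_t,\hat S_{t+1})$ to a telescoping-style sum where each full pair-window $[t,t+1]$ contributes at most $z^*_t$; the standard ``every term appears in at most two consecutive windows'' double-counting then gives $f(\hat{\mathcal{S}}) \le 2\sum_t z^*_t$ or a comparable bound, and the block decomposition turns $\sum_t z^*_t$ into a constant times the sum of the confirmed values $\sum_{\text{blocks}} z^*_{t^\star-1}$.

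Putting the two steps together, in each block the optimum gets at most some constant $c_1$ times the block's confirmed value $z^*_{t^\star-1}$ (from the geometric decay of the abandoned-start values plus the double-counting factor $2$), while the algorithm gets at least that confirmed value; summing over blocks yields $f(\hat{\mathcal{S}}) \le 4\, f(S_1,\dots,S_T)$, hence $4$-competitiveness. The special boundary cases — $t=1$, where \texttt{Balance} just starts the first sequence; the final step $T$, handled via the artificial pair $(S^*_{T,1},\emptyset)$ with $z^*_T = p_T(S^*_{T,1})$; and a trailing block that is never confirmed (it ``ends by default'' at $T$) — have to be treated separately, but each loses only an additive constant or is itself covered by the same inequality with the $\emptyset$-padding convention.

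The main obstacle I expect is the bonus accounting in the block analysis: one must be careful that the algorithm's bonus is \emph{only} reliably collected on the confirmed pair, and that abandoning a start (Case (2) or the forced Case (1)) may cost the entire tentatively-earned value, so the per-block lower bound on the algorithm has to be exactly the last confirmed $z^*$ and nothing more — yet simultaneously the optimum may be busy accumulating bonus straddling the block boundaries, which is why the factor $2$ from the doubling rule and the factor $2$ from window double-counting must combine cleanly into exactly $4$ rather than something larger. Getting these two factors to multiply rather than add, and verifying that the factor is tight against Theorem~\ref{thm:general-intersection-lower1}, is the delicate part; the rest is bookkeeping over the block partition and the two boundary conventions.
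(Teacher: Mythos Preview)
Your block decomposition at the Case-(3) confirmations, the geometric control of the $z^*_t$ values inside a block, and the charging of the optimum to the length-$2$ windows are exactly the paper's approach. However, your factorisation of the constant $4$ as ``$2$ from the doubling rule times $2$ from window double-counting'' is wrong, and carried through as written it would only yield $8$-competitiveness.

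The double-counting of profits works \emph{in your favour}, not against you: since each window $[t,t+1]$ satisfies $p_t(\hat S_t)+p_{t+1}(\hat S_{t+1})+b(\hat S_t,\hat S_{t+1})\le z^*_t$, summing over $t=1,\dots,T-1$ counts each profit at most twice and each bonus once on the left-hand side, hence $f(\hat{\mathcal S})\le\sum_{t=1}^{T-1}z^*_t$ with \emph{no} extra factor $2$. Conversely, the block sum of the $z^*_t$ is not $2z^*_{t^\star-1}$ but $4z^*_{t^\star-1}$: writing $u<v$ for two consecutive confirmations (so $t^\star=v$), the geometric tail gives $\sum_{t=u+1}^{v-1}z^*_t<2z^*_{v-1}$, and \emph{in addition} the Case-(3) condition at $v$ contributes $z^*_v\le 2z^*_{v-1}$, so $\sum_{t=u+1}^{v}z^*_t\le 4z^*_{v-1}$. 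The factor $4$ therefore lives entirely inside the block analysis (it is $4\times 1$, not $2\times 2$). For the trailing boundary, the convention $z^*_T=p_T(S^*_{T,1})\le z^*_{T-1}$ prevents Case (2) at time $T$, so either $T\in V$ or $T-1\in V$; there is no genuinely ``unconfirmed trailing block'' and no additive-constant patch is needed.
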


\begin{proof}
Let $V$ be the set of time steps in which Case (3) occurred. In the proof, intuitively we partition the time period into periods which end at some time $t\in V$, and prove the claimed ratio in each of these sub-periods.

Formally, let $u,v$ ($u<v$) be two time steps in $V$ such that $w\not \in V$ for any $u<w<v$. Note that since Case (3) occurred at time $u$, Case (1) occurred at $u+1$, so $u\neq v-1$, and Case (2) occurred at time $u+2,\dots,v-1$. So $z^*_t>2z^*_{t-1}$ for $t=u+2,\dots,v-1$. By an easy recurrence, this means that, for all $t\in\{u+1,\dots,v-1\}$, we have $z^*_t< z^*_{v-1}/2^{v-1-t}$. By taking the sum, we get $\sum_{t=u+1}^{v-1}z^*_t < 2 z^*_{v-1}$. Since Case (3) occurred at $v$, $z^*_v\leq 2z^*_{v-1}$. Finally:

$$\sum_{t=u+1}^{v}z^*_t\leq 4 z^*_{v-1}.$$

Now, at each time $v$ for which case (3) occurred, we choose $S_v=S^*_{v-1,2}$. As previously said, Case (3) did not occur at $v-1$, so we choose $S_{v-1}=S^*_{v-1,1}$. Then the algorithm gets value at least $z^*_{v-1}$ for these two time steps. In other words
$f(S_1,\dots,S_T)\geq \sum_{v\in V} z^*_{v-1}$.
Consider first the case where $T\in V$ (case (3) occurred at time $T$). Then we get a partition of the time steps into subintervals ending in $v\in V$. So
$$\sum_{t=1}^{T}z^*_t\leq 4 \sum_{v\in V}z^*_{v-1}\leq 4 f(S_1,\dots,S_T).$$

Let $(\hat{S}_1,\dots,\hat{S}_T)$ be an optimal solution. We have $p_t(\hat{S}_t)+p_{t+1}(\hat{S}_{t+1})+b(\hat{S}_t,\hat{S}_{t+1})\leq z^*_t$. So $f(\hat{S}_1,\dots,\hat{S}_T)\leq \sum_{t=1}^{T-1}z^*_t$, and:
$$f(\hat{S}_1,\dots,\hat{S}_T)\leq  \sum_{t=1}^{T}z^*_t \leq 4 f(S_1,\dots,S_T).$$
Note that this is overestimated, each $p_t(\hat{S}_t)$ appears two times in the sum.

Now, if $T\not\in V$, then $T-1\in V$: indeed, Case (2) cannot occur at time $T$ (since $z^*_t\leq z^*_{t-1})$. So we have in this case:
$$\sum_{t=1}^{T-1}z^*_t\leq 4 \sum_{v\in V}z^*_{v-1}\leq 4 f(S_1,\dots,S_T).$$
But again since $f(\hat{S}_1,\dots,\hat{S}_T)\leq \sum_{t=1}^{T-1}z^*_t$, we have 
$$f(\hat{S}_1,\dots,\hat{S}_T)\leq  \sum_{t=1}^{T-1}z^*_t \leq 4 f(S_1,\dots,S_T).$$ This completes the proof.
\end{proof}

We prove a matching lower bound. The idea is as follows: As can be seen from the proof of Theorem~\ref{thm:general-intersection-upper}, the estimate on the profit has slack for the $4$-competitive algorithm. We give a construction in which there is no profit and in which the bonus when not ``committing'' to the solution from the previous time step is geometrically increasing over time; otherwise the bonus is $0$. As it turns out, however, when the factor is $2$ in each time step, we cannot show a lower bound of $4$ in case the algorithm does not commit until the last time step. Interestingly, if we use the minimum factor to show a lower bound of $4-\epsilon$ in case the algorithm commits at any time step but the last, we can find a large-enough time horizon such that, in case the algorithm commit only in the last time step, we can also show a lower bound of $4-\epsilon$.

\begin{theorem}\label{thm:general-intersection-lower1}
Consider the GE model with intersection bonus. For any $\epsilon >0$, there is a $T_\epsilon$ such that, for each number of time steps $T\geq T_\epsilon$, there is no $4-\epsilon$ competitive ratio.
\end{theorem}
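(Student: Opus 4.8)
The plan is to exhibit an adversarial, algorithm‑adaptive instance with no profit at all, so that the value of every solution sequence is just the sum of its intersection bonuses; the ground set size grows with $T$, and the horizon is split into \emph{phases}. The construction imitates the tightness of the analysis of \texttt{Balance}. Inside a phase the ``stake'' at a step $t$ --- the largest bonus realizable, at that moment, between steps $t$ and $t+1$ --- is made to grow geometrically by a factor $c=2+\delta$ slightly above $2$ and then to drop to exactly twice the previous stake at one designated step, after which a fresh phase starts. Feasibility is arranged so that there is a purely offline solution which at each step ``rides'' the growing stake and cashes it in, reaching value close to $\sum_t z_t$, while any online algorithm --- which, when fixing $S_t$, knows only $I_t$ and $I_{t+1}$ --- is effectively always one lookahead window behind: at each step it must either \emph{commit} (keep enough of its previous solution to collect the comparatively stale current stake) or \emph{gamble} (switch to a fresh, higher‑staked set and collect nothing now), and whenever it commits the adversary, within its two‑step reach, invalidates that set for further continuation on a parallel track still available to the optimum, forcing a restart.

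Granting such a construction, I would run the window argument from the analysis of \texttt{Balance} on the adversary's side. Let $V$ be the steps at which the algorithm commits; for two consecutive ones $u<v$ the stake escalates by $c$ from $u+1$ to $v-1$ and equals twice the previous stake at $v$ --- this slowdown is exactly what makes committing the locally correct move at $v$, as in the rule defining \texttt{Balance}. Over the window $[u+1,v]$ the algorithm collects only the single stake $z_{v-1}$, whereas the optimum collects $\sum_{t=u+1}^{v}z_t=\big(\sum_{t=u+1}^{v-1}z_t\big)+z_v$, which for $\delta$ small and the window long is at least $(4-\epsilon)z_{v-1}$, since the geometric sum approaches $\frac{c}{c-1}z_{v-1}\to 2z_{v-1}$ and $z_v=2z_{v-1}$. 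Summing over all windows gives competitive ratio at least $4-\epsilon$ for every algorithm that commits at least once before the last step, provided $\delta$ is small enough and $T_\epsilon$ large enough that the boundary effects --- the first phase has no incoming bonus, the last phase may be cut off, and stakes must be rounded to integers --- are negligible.

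The remaining case is the one flagged as delicate just before the statement: the algorithm defers all committing to the very last step. With escalation factor exactly $2$ one cannot force the bound up to $4$ there, which is precisely why $c$ is taken strictly above $2$: keeping one ongoing phase escalating throughout, the optimum accrues value proportional to $\frac{c}{c-1}$ times the terminal stake, while the algorithm, unable to anticipate how long the escalation lasts, can at the end cash in only a bounded fraction of the terminal stake; taking $T_\epsilon$ large makes this ratio exceed $4-\epsilon$ too. So one first fixes $\delta$ small enough that both the generic window estimate and this boundary estimate beat $4-\epsilon$, and chooses $T_\epsilon$ afterwards.

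I expect the construction to be the main obstacle. It must realize the gap between an offline value of essentially $\sum_t z_t$ and an online value confined to roughly $z_{v-1}$ per window, while ensuring the adversary's adaptive invalidations handicap only the algorithm and never the optimum --- the analogue of the heavy‑item device used for Theorem~\ref{thm:general-hamming-lower} --- and all this under the strict $1$‑lookahead timing in which $I_{t+1}$ is already committed when $S_t$ is chosen, so the adversary's reaction to $S_t$ can bite only from step $t+2$ onwards. Making the escalate‑then‑slowdown phase boundaries interlock cleanly for every possible commit pattern of the algorithm, and controlling the several additive terms that dictate $T_\epsilon$, are the technical heart.
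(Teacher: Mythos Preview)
Your plan captures the right intuition---geometrically increasing stakes forcing a commit-or-gamble choice---but the scheme with a fixed factor $c=2+\delta$ does not deliver a $4-\epsilon$ lower bound, for two reasons.

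First, the per-window estimate $\sum_{t=u+1}^{v}z_t\geq(4-\epsilon)\,z_{v-1}$ needs the window $[u+1,v]$ to be long, yet the \emph{algorithm}, not the adversary, controls the window length. If the algorithm commits two steps into a phase, the ratio there is only $(z_1+z_2)/z_1=1+c=3+\delta$. More generally, with any constant factor $c$ the ratio when committing at relative time $t$ runs from $1+c$ (at $t=2$) toward $c^2/(c-1)$ (large $t$), while committing at the very last step $T$ strips the optimum of its final term and leaves roughly $c/(c-1)$. No $c>1$ makes $1+c$ and $c/(c-1)$ simultaneously at least $4-\epsilon$: the former requires $c\gtrsim 3$, the latter $c\lesssim 4/3$. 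So a pure geometric sequence cannot work, regardless of the factor.

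Second, the ``slowdown to factor~$2$ at the commit step $v$'' cannot be realized under $1$-lookahead. The stake $z_v$ is fixed by $I_v$ and $I_{v+1}$, both of which are revealed before the algorithm chooses $S_v$; hence $z_v$ cannot depend on whether the algorithm commits at $v$. Fixing the slowdown steps in advance does not help either, since a general algorithm is under no obligation to commit there.

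The paper circumvents both obstructions by abandoning constant growth. It defines the stakes $a_1,a_2,\dots$ \emph{recursively} so that the committal ratio $\bigl(\sum_{j\le t}a_j\bigr)/a_{t-1}$ equals exactly $4-\epsilon'$ for \emph{every} $t$; the resulting growth factor $b_t=a_{t+1}/a_t$ starts near $3$ and obeys $b_t=(4-\epsilon')(1-1/b_{t-1})$, decreasing steadily. The technical core is then to show this recursion eventually hits $b_t\le 1$ (so the construction terminates at a finite $T_\epsilon$) and to cap the last two stakes so that even a last-step commit yields ratio at least $4-\epsilon'$. There is a single phase, no restarts, and the only adaptivity used is invalidating the track the algorithm occupied two steps earlier---precisely the piece you already identified as compatible with the $1$-lookahead timing.
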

\begin{proof}
Consider some $1$-lookahead algorithm \texttt{A} and $\epsilon\in(0,1)$. We will show that there is some number of time steps $T_\epsilon$ such that for all numbers of time steps $T\geq T_\epsilon$ \texttt{A} is not $(4-\epsilon)$-competitive. The construction is based on a sequence $a_1,a_2,\dots,a_{T_\epsilon}$ of natural numbers that we will determine later (along with $T_\epsilon$). In the ground set, there is precisely one item $(i,j)$ for each $i\in\mathbb{N}$ with $1\leq i\leq T_\epsilon$ and $j\in\mathbb{N}$ with $1\leq j\leq\max_{1\leq k\leq T_\epsilon}a_k$. We denote the set $\{(i,j)\mid 1\leq j\leq a_t\}$ by $R_{i,t}$. In this instance, value can only be obtained from transition bonuses.

Depending on the actions of \texttt{A}, we will define some time $t^\star$ with $2\leq t^\star\leq T_\epsilon$. At time $t>t^\star$, the empty set will be the only set that can be selected. At time $t=1,2$, selecting any set $R_{i,t}$ for some $i$ or the empty set is feasible. If \texttt{A} selects the empty set in either the first or the second time step, we simply set $t^\star:=2$.

Otherwise, at time $t$ with $2<t\leq t^\star$, selecting any set $R_{i,t}$ for some $i$ such that \texttt{A} has not selected $R_{i,t'}$ at any time $t'\leq t-2$ or the empty set is feasible. If \texttt{A} selects $R_{i,t-1}$ \emph{and} $R_{i,t}$ for some $i$ and $t\geq 2$, we say that \texttt{A} \emph{confirms} at time $t$. Then, or if \texttt{A} chooses the empty set at time $t$, set $t^\star:=t+1$; if \texttt{A} never confirms, then $t^\star:=T_\epsilon$. Note that this is a feasible construction for the $1$-lookahead model. 

We consider the competitive ratio in different cases:
\begin{itemize}
    \item If \texttt{A} chooses the empty set at some time $t\leq t^\star$, \texttt{A} does not obtain value at all while the optimum can obtain positive value (at least $a_1$), so \texttt{A} is not competitive.
    \item If \texttt{A} confirms at time $t\geq 2$, it obtains value $a_{t-1}$. Note that there exists some $i^\star$ so that \texttt{A} never chooses $R_{i^\star,t'}$ for any $t'$. The optimum chooses $R_{i^\star,t'}$ for all time steps $t'=1,\dots,\min\{t+1,T_\epsilon\}$. We distinguish two cases.
    \begin{itemize}
        \item We have $t+1\leq T_\epsilon$. Then the total value of the optimum is $\sum_{j=1}^{t}a_j$, leading to competitive ratio $\sum_{j=1}^{t}a_j/a_{t-1}$.
        \item We have $t+1>T_\epsilon$, implying $t=T_\epsilon$ (otherwise \texttt{A} could not have confirmed at $t$). Then the total value of the optimum is $\sum_{j=1}^{T_\epsilon-1}a_j$, leading to competitive ratio $\sum_{j=1}^{T_\epsilon-1}a_j/a_{T_\epsilon-1}$.
    \end{itemize}
    \item If \texttt{A} never confirms, \texttt{A} does not obtain value either while the optimum can obtain value $\sum_{j=1}^{T_\epsilon-1}a_j>0$. So in this case \texttt{A} is not competitive either.
\end{itemize}

For convenience, we will now describe a sequence $a_1',a_2',\dots,a_{T_\epsilon}'$ of rational numbers; $a_1,a_2,\dots,a_{T_\epsilon}$ can then be obtained by multiplying all numbers in the former sequence with a suitable natural number. Let $\epsilon'\in(0,\epsilon)$ be rational. Now the goal can be reformulated to be to choose $a_1',a_2',\dots,a_{T_\epsilon}'$ such that the ratios 
\begin{equation}\label{eq:kevin-not-last}
    \frac{\sum_{j=1}^{t}a_j}{a_{t-1}}=\frac{\sum_{j=1}^{t}a_j'}{a_{t-1}'}\text{ for all }t<T_\epsilon
\end{equation} and
\begin{equation}\label{eq:kevin-last}
    \frac{\sum_{j=1}^{T_\epsilon-1}a_j}{a_{T_\epsilon-1}}=\frac{\sum_{j=1}^{T_\epsilon-1}a_j'}{a_{T_\epsilon-1}'}
\end{equation} (corresponding to the above ones) are all at least $4-\epsilon'$. To do so, we start by setting $a_1':=1$. Now we inductively define $a_t'$ for $t\leq T_\epsilon-2$ (note $T_\epsilon$ is yet to be defined). Assuming all $a_1',\dots,a_{t-1}'$ are defined, we set $a_t'$ to be such that \eqref{eq:kevin-not-last} for $t$ is precisely $4-\epsilon'$. Equivalently, set $a_t':=(4-\epsilon')\cdot a_{t-1}'-\sum_{j=1}^{t-1} a_j'$.

We claim there exists a (first) $t_0$ such that \begin{equation}\label{eq:kevin2}
    (4-\epsilon')\cdot a_{t_0}'-\sum_{j=1}^{t_0} a_j'\leq a_{t_0}'
\end{equation} (meaning the $(t_0+1)$-st element of the sequence $a_1,a_2,\dots$ would become smaller than $a_{t_0}'$). Then we set $T_\epsilon:=t_0+2$ and $a_{t_0+2}'=a_{t_0+1}'=a_{t_0}'$. Note that then indeed, by~\eqref{eq:kevin2} and $a_{t_0+1}'=a_{t_0}'$,~\eqref{eq:kevin-not-last} is at least $4-\epsilon'$ for $t=T_\epsilon-1$ (and therefore, by the previous argument, for all). Furthermore, since $a_{t_0+2}'=a_{t_0+1}'$,~\eqref{eq:kevin-last} is identical to~\eqref{eq:kevin-not-last} for $t=T_\epsilon-1$ and therefore also at least $4-\epsilon'$.

So it remains to show the claim. Define $$b_{t}:=\frac{(4-\epsilon')\cdot a_{t}'-\sum_{j=1}^{t} a_j'}{a_{t}'}.$$ for all $t$ including the first element where the fraction becomes at most $1$ (if it exists; otherwise the sequence is infinite). Further note that for all such $t\geq 3$ we have $\sum_{j=1}^t a_j'=(4-\epsilon')\cdot a_{t-1}'$ (by using the definition for $a_{t-1}$). Therefore $b_{t}=(4-\epsilon')\cdot(1-a_{t-1}/a_t)=(4-\epsilon')\cdot(1-1/b_{t-1})$.

We now show two properties of the sequence $b_1,b_2,\dots$:
\begin{itemize}
    \item If $b_t\geq 1$ for $t\geq 3$, then $b_{t+1}<b_t$. Note that this expression simplifies to to $b_t^2-(4-\epsilon')\cdot b_t+(4-\epsilon')>0$, which is true for all $b_t\geq 1$.
    \item The sequence $b_1,b_2,\dots$ does not converge to a value at least $1$. Suppose it did. This would imply there exists $x\geq 1$ with $x=(4-\epsilon')\cdot(1-1/x)$, which however does not have a real solution.
\end{itemize}
By basic calculus, this proves that there exists a $t$ with $b_t\leq 1$, implying the claim, which in turn implies the theorem.
\end{proof}

\section{Conclusion}\label{sec:conclusion}

In this paper, we have developed techniques for online multistage subset maximization problems and thereby settled the achievable competitive ratios in the various settings almost exactly. Disregarding asymptotically vanishing terms in the upper bounds, what remains open is the exact ratio in the general-evolution setting with Hamming bonus (shown to be between $1+\sqrt{2}$ and $3$ in this paper) and exact bounds for the models with Hamming bonus when $T\rightarrow\infty$. Furthermore, it is plausible that the ratios can be improved for (classes of) more specific problems.

We emphasize that we have focussed on deterministic algorithms in this work. Indeed, some of our bounds can be improved by randomization (assuming an oblivious adversary):
\begin{itemize}
    \item In the general-evolution model with Hamming bonus assuming submodularity and subset feasibility, there is a simple randomized $(2+o(1))$-competitive algorithm (along the lines of the algorithms in Subsubsection~\ref{subsec:submodular}): Initially partition $N$ uniformly at random into two equal-sized sets (up to possibly one item) $A$ and $B$. At each time, select the optimal solution restricted to $A$. Again, the algorithm is $(2+o(1))$-competitive separately on both profit and bonus.
    \item While the strong lower bound without lookahead in the general-evolution model with intersection bonus still holds, we can get a simple $2$-competitive algorithm for lookahead $1$: Inititally flip a coin to interpret the instance as a sequence of length-$2$ instances either starting at time $1$ or $2$. Thanks to lookahead $1$, the length-2 instances can all be solved optimally. The total value of all these length-2 instances adds up to at least the optimal value, and the expected value obtained by the algorithm is half of that.
\end{itemize}

While we believe that we have treated various of the most natural ways of defining value in multistage subset maximization problems, other ways can be thought of, to some of which our results extend. For instance, Theorem~\ref{thm:static-hamming-upper} also works for time-dependent or object-dependent bonus without major modifications (whereas, e.g., Theorem~\ref{thm:static-intersection-upper} does not).

We have not worried about computational complexity in this work (and therefore neither about the representation of the set of feasible solutions); indeed, often we use an oracle providing the optimal solution to instances of a potentially hard problem. However, we mention that, if only an approximation algorithm to the problem at hand was known, we would be able to obtain similar online algorithms whose competitive ratio would depend on the approximation guarantee of the approximation algorithm.

%\newpage
\bibliography{biblio}
\end{document}